\documentclass[12pt,onecolumn,journal,draftclsnofoot,]{IEEEtran}
\usepackage{graphicx}
\usepackage{float}
\usepackage{epstopdf}
\usepackage[cmex10]{amsmath}
\usepackage{array}
\usepackage{cite}
\usepackage[utf8x]{inputenc}
\usepackage{amssymb}
\usepackage{stackrel}
\usepackage{booktabs,multirow}
\usepackage{etoolbox}
\makeatletter
\patchcmd{\@makecaption}
{\scshape}
{}
{}
{}
\makeatother
\usepackage{amsthm}
\usepackage{amsmath,mathtools}

\usepackage{adjustbox}

\newtheorem{Proposition}{Proposition}
\newtheorem{Theorem}{Theorem}
\newtheorem{Remark}{Remark}
\newtheorem{Definition}{Definition}
\newtheorem{Example}{Example}
\usepackage[T1]{fontenc}
\usepackage{cuted}
\usepackage{textcomp}

\hyphenation{op-tical net-works semi-conduc-tor}

\begin{document}

\title{On Advisability of Designing Short Length QC-LDPC Codes Using Perfect Difference Families}
\author{Alireza Tasdighi and Mohammad-Reza Sadeghi
\thanks{A. Tasdighi and M. R. Sadeghi are with the Department
of Mathematics \& Computer Science, Amirkabir University of Technology, Tehran, Iran (e-mails: a.tasdighi@aut.ac.ir, msadeghi@aut.ac.ir)}}
\maketitle

\begin{abstract}
A simple and general definition of {\em quasi cyclic low-density parity-check} (QC-LDPC) codes which are constructed based on {\em circulant permutation matrices} (CPM) is proposed. As an special case of this definition, we first represent one type of so called combinatorially designed {\em multiple-edge protograph} codes. The code construction is mainly based on {\em perfect difference families} (PDF's) and is called Construction 1. Secondly, using the proposed Construction 1 along with a technique named as {\em column dispersion technique} (CDT), we design several types of multiple-edge CPM-QC-LDPC codes (codes with Construction 2) in a wide range of rates, lengths, girths and minimum distances. Parameters of some of these codes are reported in tables.
Also included in this paper are the multiplicities of short simple cycles of length up to $10$ in Tanner graph of our constructed codes. Our experimental results for short to moderate length codes show that although minimum distances of codes play an important role in waterfall region, the higher the number of short simple cycles is, the better (sharper) the waterfall is. The performances of many codes such as WiMAX, PEG, array, MacKay, algebraic and combinatorial, and also, symmetrical codes have compared with our constructed codes. Based on our numerical results and regardless of how a code is constructed, those with higher number of short simple cycles and higher minimum distances, have better waterfalls. As, high number of short simple cycles cause error floor phenomenon, we show that our constructed codes based on applying CDT on PDF's have a property that we can gradually increase their number of short simple cycles to outperform many of the codes in the literature whilst, preventing them to have very high number of short cycles and thus avoiding undesirable error floors. 
\end{abstract}

\begin{IEEEkeywords}
QC-LDPC codes, QC-LDPC code construction, Perfect difference families, Girth, Minimum distance, Inevitable cycle, Error floor
\end{IEEEkeywords}


\maketitle

\section{Introduction}
\IEEEPARstart{A}{}fter rediscovery of low-density parity-check (LDPC) codes in the late 1990’s \cite{DMacKay1}, a great deal of research effort has been expended in design and construction of these
codes. Among designing all branches of LDPC codes, quasi cyclic (QC) codes have been allocated much more efforts of the researchers. Indeed, due to their capacity-achieving and also their intrinsic property of practically implementable decoding algorithm, QC-LDPC codes have been adopted as the standard codes for various next generations of communication systems \cite{WiMax1}. Applications of these codes to storage systems, such as flash memories, have now been seriously investigated \cite{CSchoeny1}.

Of famous methods in constructing QC-LDPC codes are, {\em graph-theoretic} and {\em protograph-based} methods \cite{Thrope1,Divsalar1,XHu1,karimi1,bocharova1,HPark1,HPark2,SRanganathan1,ATasdighi1,ATasdighi2}, as well as, {\em combinatoric-based} and {\em algebraic} methods \cite{JFan1,MFossorier1,sullivan1,hagiwara1,BVasic1,Lally1, JLi1,LZhang1,GZhang1}. Recently, designing short to moderate length QC-LDPC codes based on circulant permutation matrices (CPM's) have found much attraction among the researchers\cite{hagiwara1,SRanganathan1,ATasdighi1,ATasdighi2,karimi1,bocharova1,MBaldi2}. In almost all of these cited works, they attempt to find shortest possible length of CPM-QC-LDPC codes, given an specific {\em girth} and degree distribution. Short to moderate length codes may not approach very closely to the {\em Shannon limits}, however, one advantage of designing such codes is lowering the encoding and decoding complexity and so latency.

Another area in investigating QC-LDPC codes is finding their {\em minimum distances}\cite{HLiu1,Smarandache1,MBaldi1,BVasic1}. Minimum distance plays a direct and significant role in {\em bit-error-rate} (BER) performances of the codes. Simulation results exhibit high minimum distance of a code affects its performance even under sub-optimal {\em belief propagation} (BP) decoding algorithms. In fact, given a fixed length, degree distribution and girth of a QC-LDPC code, the higher minimum distance of the code the lower BER curve of the code is. For single-edge protograph codes it has established that the minimum
Hamming distance is always upper bounded by $(d_v +1)\,!$ \cite{MFossorier1,Smarandache2}. In addition, Smarandache {\em et al.} \cite{Smarandache1} showed that for a multiple-edge CPM-QC-LDPC code of maximum column weight $d_v\in \mathrm{N}$, the maximum reachable minimum distance could be even superior to $(d_v +1)\,!$. However, looking at numerical results reported for minimum distances of single-edge protograph codes in \cite{bocharova1,ATasdighi1,ATasdighi2}, as well as, considering our new experimental results for multiple-edge protograph codes, it seems that, as long as we try to find the shortest possible length of a CPM-QC-LDPC code (given a fixed girth and degree distribution), its minimum distance tends to be decreased. 

In this work we put our concentration on investigating other features of CPM-QC-LDPC codes of short to moderate lengths that may improve BER of the codes and thus compensating the lack of low minimum distance of such codes. Roughly speaking, we observed that if multiplicities of short simple cycles in Tanner graph of a code are high enough (approximately higher than the average possible ones, given a fixed length, degree distribution and minimum distance), then, BER curve of the code will be much lower than another code with similar parameters but with considerable smaller number of simple cycles. 

To construct short length CPM-QC-LDPC codes with high number of short simple cycles, we took benefits of {\em perfect difference families} (PDF's) and {\em quasi perfect difference families} (QPDF's). We regenerate one previously known of this type of the codes with girth 6, as well as, we introduce a new flexible design of such codes with girth at least 6.
Due to the existence of high number of {\em inevitable} short cycles in Tanner graph of our constructed codes, we are able to extend the length of our primary constructed code with short length to the moderate length one by increasing $N$, where, $N$ is the {\em lifting degree} of our primary code. The inevitable short simple cycles will not vanish by increasing $N$, instead, their multiplicities will linearly increase by enlarging $N$. Although minimum distances of our codes are small, due to the high number of their simple cycles, BER performances of these codes outperform their counterparts in the literature. Table \ref{Tab3} has summarized the details of many of compared codes which are decoded in an Additive White Gaussian Noise (AWGN) channel under {\em Sum-Product} (SP) algorithm.

High number of short cycles in Tanner graph of LDPC codes may cause {\em error floor} phenomenon in BER curve of the codes. Thanks to our flexible design of CPM-QC-LDPC codes (called Construction 2) that use PDF's as their shift-values of CPM's, we show that starting from a primary QC code and without decreasing its girth and minimum distance, it is possible to design a range of QC codes, where, their number of simple cycles could be gradually increased. Our software simulations reveal that with a subtly design of the base matrix of the codes, and simultaneously, by gradually augmentation of their short cycles, it is possible to reach a target BER (down to $10^-9$ for short to moderate length codes) without observing error floor and these codes still outperform almost all of their counterpart codes in waterfall region. Figures of some of the simulated codes are added in section \ref{ENER}.  

The rest of the paper is organized as follows. Section \ref{Sec2}
presents a primary definition of difference families, theorems to show their existence, definition to CPM based QC-LDPC codes and a general formula for calculating expected values of simple cycles in Tanner graph of a regular code. In section \ref{Sec3} we first represent one previously known construction of CPM-QC-LDPC codes using PDF's by name Construction 1. In the following and using Construction 1, we introduce a flexible
method for constructing a CPM-QC-LDPC code name Construction 2. Section \ref{ENER} contains all of the experimental results such as examples, tables, figures, and our justification in advantages of using PDF's in designing short length QC-LDPC codes. Section \ref{Sec5} concludes the paper.
\section{Preliminaries}\label{Sec2}
\subsection{Perfect Difference Families}
\begin{Definition}\label{Def1}
\cite{CColbourn1} Let $G$ be a group of order $v$. A collection $\lbrace B_1, \cdots,B_t \rbrace$ of $k$-subsets of $G$ form a
$(v, k,\lambda)$ {\em difference family} (DF) if every non-identity element of $G$
occurs $\lambda$ times in $\Delta B_1 \cup \Delta B_2 \cup \cdots \cup \Delta B_t$, where, $\Delta B = \lbrace b_i - b_j |b_i,b_j \in B;\; i, j = 1, . . . ,k;\; i \neq j \rbrace$. The sets $B_i$ are base blocks.
\end{Definition}

If group $G$ in previous definition is considered to be additive group $Z_v =
\lbrace0, 1, . . . , v - 1\rbrace$, then $t$ $k$-element subsets of $Z_v$, $B_i =
\lbrace b_{i1}, b_{i2}, . . . , b_{ik}\rbrace,\; i = 1, 2, . . . , t,\; b_{i1} < b_{i2} <\cdots < b_{ik}$, form
a $(v, k, \lambda)$ {\em cyclic difference family} (CDF) if every non-zero
element of $Z_v$ occurs $\lambda$ times among the differences $b_{im}-b_{in}$,
$i = 1, 2, . . . , t,\; m \neq n,\; m, n = 1, 2, . . . , k$. We, hereafter and without loss of generality will work with CDF and refer to it simply by DF. In addition, if $v = k(k - 1)t + 1$, then $t$ blocks $B_i = \lbrace b_{i1}, b_{i2}, . . . , b_{ik}\rbrace$ form a {\em perfect} $(v, k, 1)$ difference family (PDF) over $Z_v$ if the $tk(k - 1)/2$ differences $b_{im} - b_{in} (i = 1,\cdots,t,\; 1 \leq m < n \leq k)$ cover the set $\lbrace 1, 2,\cdots, (v - 1)/2\rbrace$. If instead, they cover the set $\lbrace 1,2,\cdots,(v - 3)/2 \rbrace \cup \lbrace(v + 1)/2\rbrace$, the difference family is {\em quasi-perfect}, where we denote it by QPDF.

Considering references \cite{JDinitz1,JHuang1,RMathon1,HPark2,CColbourn1} the following two theorems are well-known for the existence of DF and/or PDF:
\begin{Theorem}\label{Theo1}
The existence of $(k(k -1)t+1, k, 1)$ DFs is given as:
\begin{itemize}
\item[1)] There exists a $(6t + 1, 3, 1)$ DF for all $t \geq 1$.
\item[2)] A $(12t + 1, 4, 1)$ DF exists for all $1 \leq t \leq 1000$, except for $t=2$.
\item[3)] A $(20t + 1, 5, 1)$ DF exists for $1 \leq t \leq 50$ and $t \neq
16, 25, 31, 34, 40, 45$.
\end{itemize}
\end{Theorem}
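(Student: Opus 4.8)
For part 1 I would give an explicit construction. When $v=6t+1$ is a prime, fix a primitive root $g$ and let $e=(v-1)/3=2t$, so that $H=\{1,g^{e},g^{2e}\}$ is the subgroup of cube roots of unity in $Z_v^{*}$. The multiplicative cosets of $H$ come in negation-pairs $g^{i}H$ and $-g^{i}H=g^{i+3t}H$ (note $-1=g^{3t}\notin H$, so negation acts freely and yields $t$ pairs), and choosing one representative from each pair gives $t$ candidate base blocks $B_i=g^{i}H$. A cyclotomic-number computation then shows that the $6t$ differences they generate cover $Z_v\setminus\{0\}$ exactly once, giving the required PDF; the same argument transfers verbatim to prime powers $v=q$ by working in the additive group of $GF(q)$ with $g$ a generator of $GF(q)^{\times}$. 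For composite $v\equiv 1 \pmod 6$ the direct cyclotomic construction need not apply, and here I would invoke the classical theorem of Peltesohn, which guarantees a cyclic Steiner triple system---equivalently a $(v,3,1)$ difference family---for every $v\equiv 1 \pmod 6$ without exception, closing part 1 for all $t\ge 1$.

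Parts 2 and 3 are finite-range assertions ($1\le t\le 1000$ and $1\le t\le 50$) accompanied by explicit lists of excluded values, so each splits into an existence direction and a nonexistence direction. For the included $t$ I would exhibit a single base-block set per order: whenever $v=k(k-1)t+1$ is a prime power satisfying the relevant cyclotomic conditions, the Wilson-type analogue of the construction above produces one directly, and for the remaining (typically composite) orders I would rely on a targeted computer search over $Z_v$, which is feasible in these bounded ranges. For the excluded values---namely $t=2$ in part 2 and $t\in\{16,25,31,34,40,45\}$ in part 3---the claim is that no difference family exists, which I would establish by exhaustive search confirming that no admissible set of $t$ base blocks tiles the differences, or by an order-specific obstruction.

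The main obstacle is not the existence direction, which is settled order-by-order by producing one example, but rather the two delicate steps surrounding it. In part 1 the subtle point is the cyclotomic-number evaluation certifying that the cosets $g^{i}H$ tile the differences exactly once; because this can fail for sporadic primes, the proof genuinely needs the general guarantee of Peltesohn's theorem rather than the explicit construction alone. In parts 2 and 3 the hard part is the nonexistence direction for the listed exceptions: ruling out a difference family cannot be done by exhibiting an object and instead requires a complete search or a structural nonexistence argument, which is where the computational weight of the theorem---and the dependence on the cited tables---concentrates.
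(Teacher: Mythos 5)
The paper contains no proof of this theorem at all: it is quoted as a summary of known results from the design-theory literature (\cite{RMathon1,JDinitz1,JHuang1,CColbourn1,HPark2}), so the only fair comparison is with the sources it points to. Your part 1 follows the classical route and is essentially sound, with two corrections. First, no delicate cyclotomic-number computation is needed and the prime-power construction never fails sporadically: with $\epsilon=g^{2t}$ a primitive cube root of unity, $\Delta H=(1-\epsilon)\,U_6$ where $U_6=\langle -1,\epsilon\rangle$ is the group of sixth roots of unity, so the blocks $s_iH$ tile $Z_q^{*}$ exactly once if and only if $\lbrace s_i\rbrace$ is a transversal of $U_6$ --- and your one-representative-per-negation-pair choice is precisely such a transversal (this is Netto's construction, valid for every prime power $q\equiv 1 \pmod 6$). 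Peltesohn's theorem is then genuinely needed only for non-prime-power orders such as $v=55$, as you say. Second, a terminological slip: what this produces is a $(6t+1,3,1)$ DF, not a PDF; perfectness is the strictly stronger requirement that the unsigned differences cover $\lbrace 1,\dots,(v-1)/2\rbrace$, and it fails for $t\equiv 2,3\pmod 4$ (that is the content of Theorem 2, not Theorem 1).

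The genuine gap is your nonexistence direction in part 3. The excluded values $t\in\lbrace 16,25,31,34,40,45\rbrace$ are not proven nonexistence results: in the cited handbook they are the orders in the range for which no $(20t+1,5,1)$ DF is \emph{known}, i.e., open cases (compare the paper's own phrasing in Theorem 2, ``are known for \dots but for no other values''). Your plan to settle them ``by exhaustive search confirming that no admissible set of $t$ base blocks tiles the differences'' would therefore be attempting to prove something that is open, and it is computationally hopeless at these sizes ($v$ up to $901$, up to $45$ base blocks of size $5$): complete enumeration of such configurations is far beyond reach, which is exactly why these cases remain undecided. Only the single exception $t=2$ in part 2 ($v=25$) is a classically proven nonexistence, and there a finite argument does work. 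A secondary overstatement: the existence half of part 2 over all $1\le t\le 1000$ was not obtained order-by-order by raw search; the literature combines direct cyclotomic (Wilson/Buratti-type) constructions with recursive composition theorems, with search filling only small residual orders. So your sketch is correct in outline for the existence halves, but the claimed nonexistence proofs in part 3 would fail, and the theorem must be read, as the paper reads it, as a survey of known existence rather than a complete characterization with proved exceptions.
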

\begin{Theorem}\label{Theo2}
The existence of $(k(k - 1)t+1, k, 1)$ PDFs is given as:
\begin{itemize}
\item[1)] A $(6t + 1, 3, 1)$ PDF exists, if and only if, $t \equiv 0\; \text{or}\; 1 \mod 4$. 
\item[2)] A $(12t + 1, 4, 1)$ PDF exists for $t = 1, 4 \leq t \leq 1000$.
\item[3)] $(20t + 1, 5, 1)$ PDFs are known for $t = 6, 8, 10$ but for no other values in $1 \leq t \leq 50$.
\item[4)] There is no $(k(k - 1)t + 1, k, 1)$ PDF for $k \geq 6$.
\end{itemize}
\end{Theorem}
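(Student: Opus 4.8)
The plan is to treat the four items separately, since they draw on different parts of difference-family theory. Throughout I would use the reformulation that, after translating each base block so its least element is $0$, a perfect $(k(k-1)t+1,k,1)$ difference family is precisely a choice of $t$ blocks $B_i=\lbrace 0,b_{i2},\dots,b_{ik}\rbrace$ over $Z_v$ whose $t\binom{k}{2}$ positive internal differences are \emph{all distinct} and exhaust $\lbrace 1,2,\dots,(v-1)/2\rbrace$, where $(v-1)/2=t\binom{k}{2}$. Equivalently, a PDF is a \emph{perfect system of difference sets} with $t$ components of common size $k$ and initial value $1$.

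For part 1 ($k=3$) I would first pass to this partition picture: a base block $\lbrace 0,x,x+y\rbrace$ contributes the difference triple $\lbrace x,y,x+y\rbrace$, so a $(6t+1,3,1)$ PDF is equivalent to a partition of $\lbrace 1,2,\dots,3t\rbrace$ into $t$ triples $\lbrace x,y,z\rbrace$ with $x+y=z$. The necessity of $t\equiv 0,1\pmod 4$ is then a short counting argument: each triple contributes $x+y+z=2z$, so
\[
\sum_{j=1}^{3t} j=\frac{3t(3t+1)}{2}=2\sum_i z_i
\]
must be even, i.e. $3t(3t+1)\equiv 0\pmod 4$, which holds precisely for $t\equiv 0,1\pmod 4$. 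For sufficiency I would exhibit such a partition for every admissible $t$, reducing it to the existence of a Skolem-type sequence of order $t$ and reading the triples off the pairs it produces, invoking the classical theorem that such sequences exist exactly when $t\equiv 0,1\pmod 4$.

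Parts 2 and 3 are, I expect, not a matter of a uniform closed-form proof but of explicit construction together with a finite search. For the existence half I would verify the small ``seed'' families by hand or machine and then propagate them through the recursive and multiplier constructions for difference families over $Z_v$ (exploiting the prime-power structure of $v$ where available) to cover the stated ranges of $t$; this is where the cited tables of $(12t+1,4,1)$ and $(20t+1,5,1)$ systems originate. For the negative statements---no $(12t+1,4,1)$ PDF for $t\in\lbrace 2,3\rbrace$, and $(20t+1,5,1)$ PDFs only for $t\in\lbrace 6,8,10\rbrace$ in the range $1\le t\le 50$---I would rely on exhaustive computer search, since $(v-1)/2$ is fixed and small in each case, leaving only finitely many candidate partitions of the difference set to rule out.

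Part 4 is the genuine obstacle. Here one must rule out \emph{every} $t$ simultaneously for each $k\ge 6$, so no finite search suffices and the parity trick of part 1 does not extend. The approach I would take is the perfect-system-of-difference-sets viewpoint together with an extremal argument on the largest differences: the value $(v-1)/2$ must occur as the span $b_{ik}-b_{i1}$ of some block, and the next-largest values are then forced to lie inside the same few blocks, which over-constrains the $\binom{k}{2}$ differences once $k$ grows. Making this precise is delicate, and the clean statement is the classical non-existence theorem for perfect systems of difference sets with components of size $\ge 6$ (the $t=1$ instance being that perfect Golomb rulers exist only for at most four marks); I would invoke that result rather than reprove it. The main difficulty, and the reason this item is cited rather than derived, is precisely that the $k\ge 6$ obstruction is a deep global phenomenon rather than a local counting identity.
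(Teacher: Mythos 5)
The paper does not prove this theorem at all: it states Theorems \ref{Theo1} and \ref{Theo2} as ``well-known'' and attributes them to the cited design-theory literature (Dinitz--Rodney, Huang--Skiena, Mathon, Park \emph{et al.}, and the Colbourn--Dinitz Handbook), so there is no in-paper argument to match yours against. Judged against how those sources actually establish the items, your reconstruction is essentially faithful. For item 1, the reduction of a $(6t+1,3,1)$ PDF to a partition of $\lbrace 1,\dots,3t\rbrace$ into triples $\lbrace x,y,x+y\rbrace$, the parity count $3t(3t+1)/2 = 2\sum_i z_i$ forcing $t\equiv 0,1 \pmod 4$, and sufficiency via Skolem sequences (blocks $\lbrace 0, a_i+t, b_i+t\rbrace$ from the pairs $(a_i,b_i)$ with $b_i-a_i=i$) is exactly the classical argument, and it is also the machinery the paper invokes when it mentions Skolem sequences and gracefully labelled prisms for generating Tables \ref{Tab1} and \ref{Tab2}. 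For item 4, your decision to invoke rather than reprove the nonexistence is the right call: the statement is the Kotzig--Turgeon ``small component'' theorem for perfect systems of difference sets (every such system has a component of size at most $5$), and your remark that the $t=1$ case is the nonexistence of perfect rulers with more than four marks is correct; no local counting identity in the style of item 1 is known, as you say.

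The one genuine flaw is your handling of item 3. You recast it as a nonexistence claim to be settled ``by exhaustive computer search,'' but the statement --- in the paper and in the Handbook it quotes --- is about \emph{known} families: $(20t+1,5,1)$ PDFs are \emph{known} for $t=6,8,10$ and for no other $t\le 50$, which records the open status of the remaining values, not their proven impossibility. An exhaustive search over all $1\le t\le 50$ would have to rule out partitions of $\lbrace 1,\dots,10t\rbrace$ into $t$ quintuple difference sets, which is computationally out of reach well before $t=50$, and no cited source claims such a result. By contrast, your plan is sound for item 2: nonexistence for $t=2,3$ (the $(25,4,1)$ and $(37,4,1)$ cases) really is settled by small feasible searches, and the range $4\le t\le 1000$ rests on seed families propagated through recursive constructions plus computation, as you describe. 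So your proposal stands once item 3 is re-read as a statement of current knowledge rather than a theorem of nonexistence.
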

Just like \cite{HPark2} and since there are no PDFs for $k \geq 6$, as well as, no sufficiently
many PDFs for $k = 5$ from Theorem \ref{Theo2}, we put our concentration on investigation of two cases
 $(v,k=3,1)$ and $(v,k=4,1)$. Moreover, given $k=3,4$, for those cases of $t$'s that a PDF does not exist, we focus on an existing QPDF, instead. Some of the PDF's/QPDF's are provided in section \ref{ENER} (Tables \ref{Tab1} and \ref{Tab2}) using {\em Skolem sequences} and {\em gracefully labelling prisms} technique that are presented in \cite{JDinitz1,JHuang1,RMathon1}. Interested reader is also referred to \cite{CColbourn1}.
\subsection{CPM-QC-LDPC Codes And Inevitable Cycles}
Let $\mathcal{C}$ be a binary LDPC code whose parity-check matrix $H$
is an $m\times n$ array of matrices as follows:
\begingroup\fontsize{8.5pt}{11pt}\begin{align}\label{Rela1}
H=\left[\begin{array}{cccc}
H_{00}&H_{01}&\cdots &H_{0(n-1)}\\
H_{10}&H_{11}&\cdots &H_{1(n-1)}\\
\vdots &\vdots &\ddots &\vdots \\
H_{(m-1)0}&H_{(m-1)1}&\cdots &H_{(m-1)(n-1)}\\
\end{array}\right],
\end{align}\endgroup
where, $H_{ij}\;(0\leq i \leq m-1 \;\&\; 0\leq j \leq n-1)$ is either a circulant matrix that each row is a cyclically right shift of the row above it, or, a zero square matrix of size $N$. As there are $m$ row blocks in $H$, such an LDPC code is named as $m$-{\em level QC-LDPC code}. The bipartite graph which is constructed based on bi-adjacency matrix $H$ is the so-called {\em Tanner graph} of $\mathcal{C}$. Within our context, we assume that each one of the non-zero circulant matrices $H_{ij}$ is a combination (a matrix summation) of some circulant permutation matrices (CPM's) as follow:
$$ H_{ij}=I^{p^{1}_{ij}}+I^{p^{2}_{ij}}+\cdots +I^{p^{l}_{ij}},$$
where, $p^{r}_{ij}\in \lbrace 0,1,\cdots,N-1\rbrace$ and $\;p^{r}_{ij}\neq p^{r^\prime}_{ij}$ for $\;1\leq r<r^\prime\leq l,\;l\in \mathrm{N}$. $I^{p^{r}_{ij}}$ is a CPM of size $N$ which its first row is $p^{r}_{ij}$ cyclically right shifts of an identity matrix of size $N$. Let 
$ L=\max\limits_{i,j} \lbrace l| l\text{ is the number of CPM's that constitute } H_{ij}\rbrace$
so, such an LDPC code is known as {\em $m$-level Type-$L$ CPM-QC-LDPC code} and the number $N$ is {\em lifting degree} of the code. Note that our definition of CPM-QC-LDPC codes covers all types of {\em single-edge protograph codes} (\hspace*{-0.1cm}\cite{ATasdighi1}) and {\em multiple-edge protograph codes} (\hspace*{-0.1cm}\cite{HPark1}) that are produced by lifting each edge of their corresponding base graphs. 

The matrix $H$ corresponds to the following exponent matrix
\begingroup\fontsize{8.5pt}{11pt}\begin{align}\label{Rela2}
P=\left[\begin{array}{cccc}
\vec{P}_{00}&\vec{P}_{01}&\cdots &\vec{P}_{0(n-1)}\\
\vec{P}_{10}&\vec{P}_{11}&\cdots &\vec{P}_{1(n-1)}\\
\vdots &\vdots &\ddots &\vdots \\
\vec{P}_{(m-1)0}&\vec{P}_{(m-1)1}&\cdots &\vec{P}_{(m-1)(n-1)}\\
\end{array}\right],
\end{align}\endgroup
where, $\vec{P}_{ij}=\left( p^{1}_{ij},p^{2}_{ij},\cdots ,p^{l}_{ij}\right)$, corresponding to any non-zero circulant matrices $H_{ij}$ in equation (\ref{Rela1}), as well as, $\vec{P}_{ij}=(-\infty)$ if $H_{ij}$ is a zero matrix. It is obvious that the parity-check matrix $H$ will uniquely be identified if its exponent matrix $P$ is known. 
If $L=1$, then each $\vec{P}_{ij}$ is an integer or an infinity symbol where the resulting $m$-level CPM-QC-LDPC code is a single-edge protograph code. If none of the elements in $P$ are infinity (i.e., $\mathcal{C}$ is a fully-connected single-edge protograph code), then, it is well-known \cite{MFossorier1} that the necessary and sufficient condition for the existence of a cycle of length $2k^{\prime}$ in the Tanner graph
of $\mathcal{C}$, corresponding to $H$, is \vspace*{-0.16cm}
\begingroup\fontsize{8.5pt}{11pt} \begin{equation}\label{Rela3}
\sum_{i=0}^{k^{\prime}-1} \left( P_{m_{i}n_{i}} - P_{m_{i}n_{i+1}} \right) = 0  \mod N , 
\end{equation} \endgroup  
where $n_{k^{\prime}}=n_{0}$, $m_{i} \neq m_{i+1}$, $n_{i} \neq n_{i+1}$ and $P_{m_{i}n_{i}}$ is the $(m_i n_i)$-th entry of $P$. In the case that $L=1$ and $P$ contains infinity elements, equation (\ref{Rela3}) is also reliable in determining a cycle of length $2k^{\prime}$ where, we just need to ignore those summations that contain even one infinity symbol. Indeed, as each infinity symbol is corresponded to a zero matrix $H_{ij}$ in bi-adjacency matrix $H$, the Tanner graph of $H$ has no edge between those vertices that are corresponded to the columns and rows of $H_{ij}$. Therefore, these type of summations will not result in a cycle of length $2k^{\prime}$. In the case $L>1$, there are some vectors $\vec{P}_{ij}$'s of lengths greater than one. It means that the resulting $m$-level CPM-QC-LDPC code is a multiple-edge protograph code. It has been shown (Lemma 1 in \cite{HPark1}) that with a slight difference, equation (\ref{Rela3}) is also true for the case of multiple-edge protograph codes. Lemma 1 in \cite{HPark1} provides a graphical representation of the new equation (\ref{Rela3}) for multiple-edge protograph codes, however, here we need to consider an equivalent version of this lemma, which is: \vspace*{-0.16cm}
\begingroup\fontsize{8.5pt}{11pt}  \begin{equation}\label{Rela4}
\sum_{i=0}^{k^{\prime}-1} \left( P^{r^{}_i}_{m_{i}n_{i}} - P^{r^{\prime}_i}_{m_{i}n_{i+1}} \right) = 0  \mod N , 
\end{equation}
\endgroup
where $n_{k^{\prime}}=n_{0}$, $r^{}_{i}\neq r^{\prime}_{i}$ if $n_{i} = n_{i+1}$, $ r^{\prime}_{i}\neq r^{}_{i+1}$ if $m_{i} =m_{i+1}$, $P^{r^{}_i}_{m_{i}n_{i}}$ is the $r^{}_i$-th entry of $\vec{P}_{m_{i}n_{i}}$ and $\vec{P}_{m_{i}n_{i}}$ is the $(m_i n_i)$-th entry of $P$. 
Although, equation (\ref{Rela4}) defines a necessary and sufficient condition for the existence of a cycle of length $2k^{\prime}$ in the Tanner graph of $\mathcal{C}$ when $\mathcal{C}$ is an arbitrary $m$-level Type-$L$ CPM-QC-LDPC code, we again need to ignore those summations that contain even one infinity symbol.

Similar to the definition 1 in \cite{HPark1}, here we define an inevitable cycle which is needed in our context.
\begin{Definition}\label{Def2}
 An inevitable cycle of length $2k^{\prime}$ in Tanner graph of an $m$-level Type-$L$ CPM-QC-LDPC code of lifting degree $N$ is a cycle that always appear in Tanner graph of the code regardless of how high the lifting degree $N$ is. That is the equation (\ref{Rela4}) is always equal to zero, regardless of the size of $N$. 
\end{Definition}
\begin{Example}\label{Ex1}
Let $m=3$, $n=4$ and $N=13$. Consider the matrix:
\begingroup\fontsize{9.5pt}{11pt} \begin{align*}
P=\left[\begin{array}{cccc}
(0,1,8) & (-\infty) & (0) & (-\infty)\\
(-\infty) & (8,12) & (0,4) & (-\infty)\\
(-\infty) & (5) & (-\infty) & (4,9,10)
\end{array}\right],
\end{align*}
\endgroup
where, $P$ is an exponent matrix for a $(3,4)$-regular $3$-level Type-$3$ CPM-QC-LDPC code $\mathcal{C}$. In this case, below equations I and II of type in equation (\ref{Rela4}), respectively indicate the existence of inevitable cycles of lengths $4$ and $6$ in Tanner graph of $\mathcal{C}$.
\begingroup\fontsize{10pt}{9pt}
\begin{itemize}
\item[I.] $P^1_{11}-P^1_{12}+P^2_{12}-P^2_{11}=(8-0)+(4-12)=0$ 
\item[II.] $P^1_{00}-P^2_{00}+P^3_{00}-P^1_{00}+P^2_{00}-P^3_{00}=(0-1)+(8-0)+(1-8)=0$
\end{itemize}\endgroup
\end{Example}
Before terminating this part, we briefly notify some parameters of a linear code $\mathcal{C}$. Regardless whether a cycle is inevitable or not, the {\em girth} ($\mathbf{g}$) of a code $\mathcal{C}$ is given to be the length of shortest cycle in Tanner graph of the code. Also, {\em minimum Hamming distance} ($d_\mathbf{min}$) of a linear code $\mathcal{C}$ is the minimum weight of all codewords in $\mathcal{C}$, where, weight of each  codeword is the number of non-zero elements in a vector representation of that codeword. {\em Dimension} ($\mathbf{Dim}$) of a linear code $\mathcal{C}$ is the number of linearly independent codewords in vector space of the code. Finally, in the case where $\mathcal{C}$ is a CPM-QC-LDPC code of lifting degree $N$ with an $m\times n$ exponent matrix $P$, the {\em Length} ($\mathbf{Len}$) of $\mathcal{C}$ is equal to $nN$.
\subsection{Expected Number of Simple Cycles in Regular Tanner Graphs} 
Let $H$ be a regular parity-check matrix of a code $\mathcal{C}$ with column-weight $d_v$ and row-weight $d_c$. Suppose that the  number of columns and the number of rows of $H$ are $n^{\prime}$ and $m^{\prime}$, respectively. It is shown \cite{XJiao1} that the number of simple cycles of length $2k^{\prime}$ which are included in Tanner graph of $\mathcal{C}$ have a Poisson's distribution with expected number:
\begingroup\fontsize{8pt}{11pt} \begin{equation}\label{Rela5}
\begin{gathered} exp(C2k^{\prime})=\\ \hfill
\binom{m^{\prime}}{k^{\prime}} \binom{n^{\prime}}{k^{\prime}} \frac{{k^{\prime}\,!}\:(k^{\prime}-1)\,!}{2} \frac{\left[ d_v (d_v -1)\right]^{k^{\prime}}\left[ d_c (d_c -1)\right]^{k^{\prime}}}{\vert E\vert \times (\vert E\vert -1) \times \cdots \times (\vert E\vert -2k^{\prime}+1)},\hfill \end{gathered}
\end{equation}\endgroup
where, $\vert E\vert$ is the number of edges of Tanner graph of $\mathcal{C}$. Note that $\vert E\vert=d_v n^{\prime}=d_c m^{\prime}$ in a $(d_v,d_c)$-regular Tanner graph. Table \ref{Tab4} in section \ref{ENER} contains the expected values of simple cycles that they are in Tanner graphs of the investigated codes with different lengths and degree distributions.
\section{Code Construction}\label{Sec3}
In this section, we represent one type of previously known QC-LDPC code constructed based on DF's \cite{BVasic1} and also PDF's \cite{HPark2} as a special case of our $m$-level Type-$L$ CPM-QC-LDPC codes.  However, first we introduce a new simple technique named {\em column dispersion technique} (CDT). Using CDT, it is possible to design new ensembles of $m$-level Type-$L^*$ CPM-QC-LDPC codes with girth $\mathbf{g}^*$ given a $1$-level Type-$L$ CPM-QC-LDPC codes with girth $\mathbf{g}$, where, $m>1$, $L^*\leq L$ and $\mathbf{g}^*\geq \mathbf{g}$.
\begin{Definition}\label{Def3}
Let $\vec{P}_{ij}=\left( p^1_{ij}, p^2_{ij},\cdots, p^l_{ij}\right)$ be an arbitrary element of exponent $P$ in equation (\ref{Rela2}) and $m>1 \; (m\in \mathrm{N})$. $m$-column dispersion ($m$-CD) of $\vec{P}_{ij}$ partitions all the elements of this vector into $m$ pieces of smaller vectors $\vec{P}^1_{ij}$ to $\vec{P}^m_{ij}$ first, and secondly, put them in a column of size $m$, respectively from $\vec{P}^1_{ij}$ down to $\vec{P}^m_{ij}$. Also, partitioning $\vec{P}_{ij}$ doesn't necessarily mean a conventional partitioning. Instead, there may be one or more empty sets in this partition, that is, vector(s) with one infinity symbol, only. 
\end{Definition}
\begin{Example}\label{Ex2}
Let $\vec{P}_{ij}=(0, 3,4,11,17)$ be an arbitrary element of exponent matrix $P$. Below column-blocks $D_1$ and $D_2$ are two different $5$-CD for $\vec{P}_{ij}$
\begingroup\fontsize{8pt}{11pt}$$D_1=\left[\begin{array}{c}
(0)\\
(3,11)\\
(-\infty)\\
(17)\\
(4)
\end{array}\right] \;\;\mbox{and} \;\;\;\; D_2=\left[\begin{array}{c}
(0)\\
(3)\\
(4)\\
(11)\\
(17)
\end{array}\right].$$\endgroup
\end{Example}
\begin{Proposition}\label{Propo1}
Let $L, L^*, m,n, N \in \mathrm{N}$, $m>1$ and $1\leq L^*\leq L$. Suppose that $P$ is a $1\times n$ exponent matrix of a $1$-level Type-$L$ CPM-QC-LDPC code $\mathcal{C}$ with lifting degree $N$, girth $\mathbf{g}$ and minimum distance $d_{\mathbf{min}}$. If for each $j\in \lbrace 0,1,\cdots, n-1\rbrace$, $j$-th column of $m\times n$ exponent matrix $P^*$ is produced by applying $m$-CDT on $\vec{P}_{1j}$, then $P^*$ results in an $m$-level Type-$L^*$ CPM-QC-LDPC code $\mathcal{C}^*$ with lifting degree $N$ girth ${\mathbf{g}}^*$ and minimum distance $d^*_{\mathbf{min}}$, where, ${\mathbf{g}}^* \ge \mathbf{g}$ and $d^*_{\mathbf{min}} \ge d_{\mathbf{min}}$.
\end{Proposition}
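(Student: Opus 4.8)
The plan is to treat the two claims separately, deriving both from a single structural observation: the $m$-CDT only \emph{relocates} each circulant permutation matrix to a new row-block while leaving its shift value, its column index, and the lifting degree $N$ untouched. Writing $H^*_0,\dots,H^*_{m-1}$ for the $m$ row-blocks of the dispersed parity-check matrix $H^*$ (each of size $N\times nN$), the partition property in Definition \ref{Def3} says that every CPM of the $j$-th column of the original $H$ is sent to exactly one of these row-blocks. Over $\mathrm{GF}(2)$ this gives the folding identity $H=\sum_{i=0}^{m-1}H^*_i$, and, column by column, a bijection between the CPMs appearing in column $j$ of $P^*$ (collected over all row-blocks) and the entries of $\vec{P}_{0j}$ in $P$. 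These two consequences drive the two parts of the proof, and they also make $L^*\le L$ automatic, since each block of $P^*$ inherits only a sub-collection of the $L$ CPMs of its column.

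For the minimum-distance bound I would first show $\mathcal{C}^*\subseteq\mathcal{C}$. If $\mathbf{c}$ is any codeword of $\mathcal{C}^*$, then $H^*_i\mathbf{c}^{T}=0$ for every row-block $i$, so summing over $i$ and invoking the folding identity gives $H\mathbf{c}^{T}=\sum_{i=0}^{m-1}H^*_i\mathbf{c}^{T}=0$; hence $\mathbf{c}\in\mathcal{C}$. Since the nonzero codewords of $\mathcal{C}^*$ form a subset of those of $\mathcal{C}$, minimizing weight over the smaller set can only raise the minimum, and $d^*_{\mathbf{min}}\ge d_{\mathbf{min}}$ follows immediately.

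For the girth bound I would use the cycle characterization (\ref{Rela4}) in both directions. Take a shortest cycle of $\mathcal{C}^*$, of length $2k'=\mathbf{g}^*$; by necessity of (\ref{Rela4}) it corresponds to a sequence $(m_i,n_i,r_i,r'_i)$ whose shift differences sum to $0\mod N$ and which meets the index constraints for $P^*$. I then \emph{fold} this sequence by keeping the columns $n_i$ and the shift values, collapsing every row-block to the single row-block of $P$, and relabelling each edge index $r$ by the position of its CPM inside the corresponding $\vec{P}_{0\,n}$. Because the shift values are unchanged, the folded sum is still $0\mod N$, so it only remains to verify the index constraints of (\ref{Rela4}) for $P$. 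The within-term constraint $\tilde{r}_i\neq\tilde{r}'_i$ when $n_i=n_{i+1}$ follows from injectivity of the per-column relabelling applied to the corresponding $\mathcal{C}^*$ constraint. The between-term constraint $\tilde{r}'_i\neq\tilde{r}_{i+1}$, now demanded for every $i$ since all row-blocks coincide, splits into two cases: when $m_i=m_{i+1}$ it descends directly from the matching constraint for $P^*$, and when $m_i\neq m_{i+1}$ it holds precisely because the partition sends the two CPMs, which live in different row-blocks of $P^*$ but the same column $n_{i+1}$, to distinct entries of $\vec{P}_{0\,n_{i+1}}$. Sufficiency of (\ref{Rela4}) then produces a cycle of length $2k'$ in $\mathcal{C}$, so $\mathbf{g}\le 2k'=\mathbf{g}^*$.

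The main obstacle, and the only point where the partition hypothesis is truly essential, is this last between-term check in the case $m_i\neq m_{i+1}$: when the cycle of $\mathcal{C}^*$ crosses from one row-block to another there is no constraint in (\ref{Rela4}) tying $r'_i$ to $r_{i+1}$, so after folding one must rule out an accidental coincidence of the two relabelled indices. This is exactly where ``dispersion $=$ partition'' is used, since distinct row-blocks receive disjoint CPMs, and it is the step I would write out most carefully. A secondary, more cosmetic point is that folding may identify two distinct check nodes of $\mathcal{C}^*$ occupying the same circulant position in different row-blocks; because I argue entirely through the algebraic condition (\ref{Rela4}) rather than through the walk itself, this collision never needs to be handled directly, and even if the folded object were a non-simple closed walk it would still contain a cycle of length at most $2k'$, preserving $\mathbf{g}\le\mathbf{g}^*$.
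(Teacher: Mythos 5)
Your proof is correct, and while your girth argument coincides with the paper's, your minimum-distance argument takes a genuinely different and cleaner route. For the girth, the paper performs exactly your folding of the cycle sum (its equation (\ref{Rela6})): it maps each entry ${P^*}^{r}_{m_i n_i}$ back to the unique element of $\vec{P}_{1 n_i}$ it came from and asserts that the index constraints of (\ref{Rela4}) transfer. You make explicit the one case the paper glosses over — the between-term constraint after folding when $m_i \neq m_{i+1}$, where only the partition property of Definition \ref{Def3} (distinct row blocks receive disjoint sub-vectors of the same column) rules out an accidental coincidence of the relabelled indices — and you are right that this is the sole point where the dispersion-as-partition hypothesis is truly load-bearing; the paper's proof logic is otherwise identical, phrased contrapositively (every candidate sum of length $2k' < \mathbf{g}$ in $P^*$ folds to a nonzero sum for $P$). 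For the minimum distance, however, the paper argues at the level of dependent column sets of $H^*$ and $H$, via two ad hoc facts about how columns can intersect in common rows (one of which invokes girth $\geq 6$, an assumption the proposition itself never makes), concluding that any dependent set in $H^*$ yields a dependent set in $H$ of the same cardinality. Your folding identity $H = \sum_{i=0}^{m-1} H^*_i$ over $\mathrm{GF}(2)$ replaces all of this with the one-line subcode containment $\mathcal{C}^* \subseteq \mathcal{C}$: it is more elementary (pure linearity of folding, no appeal to girth or intersection patterns), slightly more general (no girth-$6$ hypothesis needed), and it gives more, since every codeword of $\mathcal{C}^*$ — not just a minimum-weight one — is exhibited as a codeword of $\mathcal{C}$.
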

\begin{proof}
From the definition of column dispersion of entry $\vec{P}_{1j}$ ($0\leq j\leq n-1$) it is obvious that by partitioning each one of these vectors into smaller vectors, their maximum length $L^*$ is less than or at most equal to the maximum length of the vectors in $P$. To see that $\mathbf{g}^*\geq \mathbf{g}$, it is sufficient to show that equation (\ref{Rela4}) is not equal to zero (module $N$) for those cycles of Tanner graph of code $\mathcal{C}^*$ which have lengths less than $\mathbf{g}$. Let $2k^{\prime}<\mathbf{g}$. Also, let ${P^*}^{r^{}_i}_{m_{i}n_{i}}$ and ${P^*}^{r^{\prime}_i}_{m_{i}n_{i+1}}$ be respectively, the $r^{}_i$ and $r^{\prime}_i$ elements ($r^{}_i,r^{\prime}_i\leq L^*$) of vectors $\vec{P^*}_{m_{i}n_{i}}$ and $\vec{P^*}_{m_{i}n_{i+1}}$, where, $\vec{P^*}_{m_{i}n_{i}}$ and $\vec{P^*}_{m_{i}n_{i+1}}$ are  the $(m_{i}n_{i})$ and $(m_{i}n_{i+1})$ entries of $P^*$. Then;
\begingroup\fontsize{8.5pt}{11pt}\begin{align}\label{Rela6}
\sum_{i=0}^{k^{\prime}-1} \left( {P^*}^{r^{}_i}_{m_{i}n_{i}} - {P^*}^{r^{\prime}_i}_{m_{i}n_{i+1}} \right) =\sum_{i=0}^{k^{\prime}-1} \left( {P}^{\bar{r}^{}_i}_{1n_{i}} - {P}^{\bar{r}^{\prime}_i}_{1n_{i+1}} \right) 
\end{align} \endgroup
where, ${P}^{\bar{r}^{}_i}_{1n_{i}}$ and ${P}^{\bar{r}^{\prime}_i}_{1n_{i+1}}$ are respectively, the $\bar{r}^{}_i$ and $\bar{r}^{\prime}_i$ elements ($\bar{r}^{}_i,\bar{r}^{\prime}_i\leq L$) of vectors $\vec{P}_{1n_{i}}$ and $\vec{P}_{1n_{i+1}}$ in $P$. Moreover, if $r^{}_i\neq r^{\prime}_i$ ($r^{\prime}_i\neq r^{}_{i+1}$) when $n_i=n_{i+1}$ ($m_i=m_{i+1}$), then, $\bar{r}^{}_i\neq \bar{r}^{\prime}_i$ ($\bar{r}^{\prime}_i\neq\bar{r}^{}_{i+1}$). As the right hand side of equation (\ref{Rela6}) is always unequal to zero module $N$ ($\mathcal{C}$ is of girth $\mathbf{g}>2k^{\prime}$), so, the left hand side of this equation is not equal to zero module $N$. To demonstrate that $d^*_\mathbf{min} \geq d_\mathbf{min}$, we first consider the parity check matrices $H$ and $H^*$ of type in equation (\ref{Rela1}) that are respectively lifted based on exponent matrices $P$ and $P^*$ given a lifting degree $N$. It is sufficient to show that if the set $\lbrace c^*_1,c^*_2,\cdots,c^*_{d^*}\rbrace$ ($d^*_\mathbf{min}\leq d^*$) forms a set of dependent columns of $H^*$ then the set $\lbrace c_1,c_2,\ldots,c_{d^*}\rbrace$ constitutes a set of dependent columns of $H$ where both columns $c^*_i$ and $c_i$ ($1\leq i \leq d^*$) are in a similar location, respectively, in $H^*$ and $H$\footnote{By a set of dependent columns we mean that the summation of all columns in that set is equal to an all zero column vector module $2$.}. Note that $H$ ($H^*$) consists of $1$ ($m$) row block(s) and $n$ ($n$) column blocks. Since $P^*$ is constructed by applying $m$-CDT on $P$, shift values in each column of $P^*$ (correspondingly the CPM's in each column block of $H^*$) are exactly the same as those ones in each column (column block) of $P$ ($H$). We consider two facts: 1) if two columns $c^*_{i_1}$ and $c^*_{i_2}$ ($i_1\neq i_2$) intersect in a common row in $H^*$, their corresponding columns $c_{i_1}$ and $c_{i_2}$ will intersect in a common row as a part of row block $1$ in $H$. Note that since the girths of $\mathcal{C}^*$ and $\mathcal{C}$ are greater than or equal to $6$, so, each two different columns of their parity check matrices will intersect in at most one common row. 2) assuming that two columns $c^*_{i_1}$ and $c^*_{i_2}$ ($i_1\neq i_2$) intersect in a common row $r_j$ ($j\in\lbrace 1,2,\cdots,N\rbrace$) of a row block number $r$ ($1\leq r\leq m$) of $H^*$. So, none of these columns intersect in another column $c^*_{i_3}$ in prespecified common row $r_j$ of any row block $r^{\prime}$ ($1\leq r^{\prime}\leq m\;\&\;r\neq r^{\prime}$). It is because, none of the CPM's in each column block of $H^*$ are the same. Facts (1) and (2) guarantee that if the summation \begingroup\fontsize{8.5pt}{11pt}$\sum_{l=1}^{d^*}c^*_{l}\;$\endgroup is equal to a zero vector module $2$ then the summation \begingroup\fontsize{8.5pt}{11pt}$\sum_{l=1}^{d^*}c_{l}\;$\endgroup is also equal to a zero vector module $2$. This means that for any set of dependent columns of $H^*$ of cardinality $d^*$ there is a set of dependent columns of $H$ of the same cardinality. However, there may be set's of dependent columns of $H$ with smaller cardinality. Thus $d^*_\mathbf{min} \geq d_\mathbf{min}$.
\end{proof}
\subsection{Construction 1}
Given a $(v,k,1)$ DF with $t$ blocks $B_{i}\;(1\leq i\leq t)$, each of size $k$ and let $P$ be a $1\times t$ exponent matrix, where, entries of vector $\vec{P}_{0,j}\;(0\leq j\leq t-1)$ are exactly those integer numbers in $B_{j+1}$. If $N\in \mathrm{N}$ is taken to be greater than all of the elements in $B_i$'s, then $P$ results in a $(d_v=k,d_c=tk)$-regular $1$-level Type-$k$ CPM-QC-LDPC code $\mathcal{C}$ of lifting degree $N$ and length $tN$ where its Tanner graph has no inevitable cycle of length $4$ \cite{BVasic1}. In order to prevent those remaining 4-cycles that are not inevitable (i.e., in order to $\mathcal{C}$ has a girth $\mathbf{g}\geq 6$), we have to pick $N$ in a way that equation (\ref{Rela4}) is not equal to zero module $N$ when $k^{\prime}=2$. It is known \cite{HPark2} that if our considered DF is a PDF, then $P$ with any lifting degree $N\geq v$ result in a code of girth at least 6. In addition, from the definition of QPDF it is clear that if we consider a QPDF instead of PDF, then  $P$ with any lifting degree $N=v-1,v+1,v+2,\ldots$ results in a code of girth at least 6. Note that Construction 1 was first introduced in \cite{BVasic1} considering BIBD's and later in \cite{HPark2}, so we leave its prove for the sake of brevity. Also, Construction 1 results in QC-LDPC codes with $k+1\leq d_\mathbf{min}\leq 2k$ \cite{BVasic1}. Tables \ref{Tab1} and \ref{Tab2} in section \ref{ENER} contains many PDF's (and/or QPDF's) along with a range of accurate lifting degrees that are used to construct girth $6$ $1$-level Type-$k$ CPM-QC-LDPC codes, when $k=3,4$. Also included in these tables are the minimum distances of such codes for which the lifting degrees are given to be the minimum possible ones.
\begin{Example}\label{Ex3}
Let $(v=37,k=3,\lambda=1)$ be QPDF in Table I where $t=6$. Based on Construction 1, below $1\times 6$ exponent matrix:
\begingroup\fontsize{9pt}{11pt}$ P=\left[ \begin{array}{@{}c@{}c@{}c@{}c@{}c@{}c@{}}
(0,1,8)\;\;\; & (0,6,19)\;\;\; & (0,9,12)\;\;\; & (0,10,14)\;\;\; & (0,11,16)\;\;\; & (0,15,17)
\end{array}\right]$ \endgroup
results in a $(d_v=3,d_c=18)$-regular $1$-level Type-$3$ CPM-QC-LDPC code with girth $6$, $d_\mathbf{min}=4$ and lifting degree $N=37, 39,40,\ldots\;$.
\end{Example}
\subsection{Construction 2}
Let $\mathcal{C}_1$ be the $1$-level Type-$L_1$ CPM-QC-LDPC code of lifting degree $N$, girth  $\mathbf{g}_1$ and minimum distance ${d_1}_\mathbf{min}$ constructed based on Construction 1. Subject to the Proposition \ref{Propo1} and considering $\mathcal{C}_1$, it is possible to construct a new $m$-level Type-$L_2$ CPM-QC-LDPC code $\mathcal{C}_2$ of lifting degree $N$, girth $\mathbf{g}_2$ and minimum distance ${d_2}_\mathbf{min}$ where, $m>1$, $1\leq L_2\leq L_1$, $\mathbf{g}_2\geq \mathbf{g}_1$ and ${d_2}_\mathbf{min} \geq {d_1}_\mathbf{min}$.
\begin{Example}\label{Ex4}
Let $P_1$ be the exponent matrix in Example \ref{Ex3}. Given $m=3$, $L_2=1$ and $N=37,39,40,\ldots$ , then, the following matrix
\begingroup\fontsize{9pt}{11pt} $$  P_2=\left[ \begin{array}{cccccc}
0 & 0 & 0 & 0 &0&0 \\
1 & 6 & 9 &10&11&15 \\
8 &19&12&14&16&17
\end{array}\right] $$ \endgroup
which is constructed by applying $3$-CDT (described in Proposition \ref{Propo1}) on $P_1$, is an exponent matrix to a $(3,6)$-regular $3$-level Type-$1$ CPM-QC-LDPC code with girth at least $6$ and minimum distance at least $4$. 
\end{Example}
\begin{Remark}\label{Rem1}
Proposition \ref{Propo1} shows that it is possible to preserve or probably improve the girth of the code that is constructed based on CDT without changing its lifting degree $N$. However, based on how we disperse the underlying primary exponent matrix to the desired (second) exponent matrix, there may exist integer numbers smaller than $N$ which taking them as accurate lifting degrees for second exponent matrix, result in CPM-QC-LDPC codes with girth at least $6$. For example, using a computer search it is easy to check that exponent matrix $P_2$ in Example \ref{Ex4} with lifting degrees $N\geq 20$ also results in girth $6$ codes.
\end{Remark}
\begin{Remark}\label{Rem2}
As it was previously expressed, codes constructed with Construction 1 have an upper bound minimum distance equal to $2k$ where $k$ was the column weight of the codes. However, codes constructed by Construction 2 (i.e., by applying Proposition \ref{Propo1} on the codes of Construction 1) may have larger minimum distances. For instance, we found out exponent matrix $P_2$ in Example \ref{Ex4} with lifting degree $20$ and $21$ results in a code with minimum distance\footnote{All the reported (upper bound or exact) minimum distances of simulated codes have been found by MAGMA \cite{MAGMA}.} equal to $8$ and $10$, respectively. 
\end{Remark}
\section{Numerical and Experimental Results}\label{ENER}
This section comprises numerical results to PDF's and QPDF's when $k=3,4$ (Tables \ref{Tab1} and \ref{Tab2}), our examples in constructing CPM-QC-LDPC codes of various lengths and rates using Constructions 1 and 2, tables to summarize the parameters of all of the simulated codes (Tables \ref{Tab3} and \ref{Tab4}) and several figures of BER performances of some of the simulated codes. Moreover, we put our justifications on advisability of designing CPM-QC-LDPC codes using PDF's as explanatory notes. 
\begin{table}[ht]
\setlength{\tabcolsep}{.7 pt}
\centering
\caption{$(v,k=3,\lambda=1)$ Difference Families That Result In $1$-level Type-III QC-LDPC Codes With Girth 6.
($N$ Is The Accurate Lifting Degree. $t$ Is The Number Of Blocks. $\mathbf{Dim}$ And $d_\mathbf{min}$ Are Respectively Dimension And Minimum Distance Of The Resulting Code Constructed By Smallest Accurate Lifting Degree $N$.)}
\vspace{-.7em}
\begingroup\fontsize{5 pt}{4 pt}
\begin{tabular}{|@{}c@{}|@{}c@{}|@{}l@{}|@{}c@{}|@{}c@{}|@{}c@{}|@{}l@{}|}
\hline
 $t$ & $v$ & ~~~~~~$N$ & $\mathbf{Dim}$ &$d_\mathbf{min}$ & Type & ~~~~~~~~~~~~Difference Set's \\
\hline
\hline  
$1$ & $7$ & $7,8,9,\ldots$ &  $3$ & $4$  &\textbf{PDF}& $\lbrace 0,1,3\rbrace$ \\
\hline
  $2$ & $13$ &$13,15,16,\ldots$&  $13$ & $4$  & \textbf{QPDF} & $\lbrace 0,1,4\rbrace,\lbrace 0,2,7\rbrace$ \\
\hline
  $3$ & $19$ & $19,21,22,\ldots$ &  $38$ & $6$  &\textbf{QPDF}& $\lbrace 0,1,5\rbrace,\lbrace 0,3,10\rbrace, \lbrace 0, 6,8\rbrace$ \\
\hline
 $4$ & $25$ & $25,26,27,\ldots$ &  $75$ & $4$  &\textbf{PDF}& $\lbrace 0,1,6\rbrace,\lbrace 0,4,12\rbrace,\lbrace 0,7,10\rbrace, \lbrace 0,9,11\rbrace$\\
\hline
  $5$ & $31$ &$31,32,33,\ldots$&  $124$ & $6$  &\textbf{PDF}&  $\begin{gathered} \lbrace 0,1,7\rbrace,\lbrace 0,5,15\rbrace,\lbrace 0,8,11\rbrace,\lbrace 0,9,13\rbrace , \hfill \\ \lbrace 0,12,14\rbrace\hfill \end{gathered} $\\
\hline
  $6$ & $37$ & $37,39,40,\ldots$ &  $185$ & $4$  & \textbf{QPDF} & $\begin{gathered} \lbrace 0,1,8\rbrace,\lbrace 0,6,19\rbrace,\lbrace 0,9,12\rbrace,\lbrace 0,10,14\rbrace ,\hfill \\ \lbrace 0,11,16\rbrace ,\lbrace 0, 15,17\rbrace \hfill \end{gathered}$ \\
\hline
  $7$ & $43$ & $43,45,46,\ldots$ &  $258$ & $6$  & \textbf{QPDF} & $\begin{gathered} \lbrace 0,1,9\rbrace,\lbrace 0,7,22\rbrace,\lbrace 0,10,12\rbrace,\lbrace 0,11,16\rbrace ,\hfill \\ \lbrace 0,13,19\rbrace ,\lbrace 0,14,18\rbrace,\lbrace 0,17,20\rbrace
  \hfill \end{gathered}$ \\
\hline
  $8$ & $49$ & $49,50,51,\ldots$ &  $343$ & $4$  &  \textbf{PDF} & $\begin{gathered} \lbrace 0,1,10\rbrace,\lbrace 0,8,24\rbrace,\lbrace 0,11,13\rbrace,\lbrace 0,12,17\rbrace , \hfill \\ \lbrace 0,14,20\rbrace ,\lbrace 0,15,22\rbrace,\lbrace 0,18,21 \rbrace, \lbrace 0,19,23 \rbrace \hfill \end{gathered}$ \\
\hline
  $9$ & $55$ & $55,56,57,\ldots$ &  $440$ & $4$  & \textbf{PDF} &$\begin{gathered} \lbrace 0,1,27\rbrace,\lbrace 0,2,14\rbrace,\lbrace 0,4,24\rbrace,\lbrace 0,5,15\rbrace , \hfill \\ \lbrace 0,6,23\rbrace ,\lbrace 0,7,18\rbrace,\lbrace 0,8,21\rbrace, \lbrace 0,9,25\rbrace, \hfill \\ \lbrace 0,19,22\rbrace
  \hfill \end{gathered}$ \\
\hline
  $10$ & $61$ & $61,63,64,\ldots$ &  $549$ & $6$  & \textbf{QPDF} & $\begin{gathered} \lbrace 0,2,3\rbrace,\lbrace 0,10,17\rbrace,\lbrace 0,14,26\rbrace,\lbrace 0,15,28\rbrace , \hfill \\ \lbrace 0,16,24\rbrace ,\lbrace 0,18,29\rbrace,\lbrace 0,19,23 \rbrace, \lbrace 0,20,25\rbrace,\hfill \\ \lbrace 0,21,27\rbrace,\lbrace 0,22,31\rbrace \hfill \end{gathered}$ \\
\hline
  $12$ & $73$ & $73,74,75,\ldots$ &  $803$ & $4$  & \textbf{PDF} & $\begin{gathered} \lbrace 0,1,36\rbrace,\lbrace 0,11,26\rbrace,\lbrace 0,6,29\rbrace,\lbrace 0,12,30\rbrace , \hfill \\ \lbrace 0,7,21\rbrace ,\lbrace 0,9,22\rbrace,\lbrace 0,8,33 \rbrace, \lbrace 0,10,34\rbrace,\hfill \\ \lbrace 0,3,31\rbrace,\lbrace 0,5,32\rbrace,\lbrace 0,2,19\rbrace,\lbrace 0,4,20\rbrace \hfill \end{gathered}$ \\
\hline
\end{tabular}
\endgroup
\label{Tab1}
\vspace{-.7em}
\end{table}
\begin{table}[ht]
\setlength{\tabcolsep}{.7 pt}
\centering
\caption{$(v,k=4,\lambda=1)$ Difference Families That Result In $1$-level Type-IV QC-LDPC Codes With Girth 6.
($N$ Is The Accurate Lifting Degree. $t$ Is The Number Of Blocks. $\mathbf{Dim}$ And $d_\mathbf{min}$ Are Respectively Dimension And Minimum Distance Of The Resulting Code Constructed By Smallest Accurate Lifting Degree $N$.)}
\vspace{-.7em}
\begingroup\fontsize{5 pt}{4 pt}
\begin{tabular}{|@{}c@{}|@{}c@{}|@{}l@{}|@{}c@{}|@{}c@{}|@{}c@{}|@{}l@{}|}
\hline
 $t$ & $v$ & ~~~~~~~~$N$ & $\mathbf{Dim}$ &$d_\mathbf{min}$& Type & ~~~~~~~~~~~~Difference Set's \\
\hline
\hline  
$1$ & $13$ & $13,14,15,\ldots$ &  $1$ & $13$  &\textbf{PDF}& $\lbrace 0,2,5,6\rbrace$ \\
\hline
  $2$ & $26$ &\begin{tabular}{@{}l@{}}
   $26,29,30,31,$\\ $33,34,\ldots$ \end{tabular}&  $ 27$ & $8$  & \textbf{None} & $\lbrace 0,1,3,9\rbrace,\lbrace 0,4,11,16\rbrace$ \\
\hline
  $3$ & $37$ & $37,41,42,\ldots$ &  $75$ & $6$  &\textbf{DF}& $\lbrace 0,12,19,20\rbrace,\lbrace 0,2,13,16\rbrace, \lbrace 0,5,9,15\rbrace$ \\
\hline
  $4$ & $49$ & $49,50,51,\ldots$ &  $148$ & $8$  &\textbf{PDF}& $\begin{gathered} \lbrace 0,1,7,23\rbrace,\lbrace 0,2,14,19\rbrace,\lbrace 0,3,13,21\rbrace, \\ \lbrace 0,4,15,24\rbrace\hfill \end{gathered}$\\
\hline
  $5$ & $61$ &$61,62,63,\ldots$&  $245$ & $7$  &\textbf{PDF}& $\begin{gathered} \lbrace 0,1,8,28\rbrace,\lbrace 0,2,14,24\rbrace,\lbrace 0,3,18,29\rbrace, \\ \lbrace 0,4,17,23\rbrace ,\lbrace 0,5,21,30\rbrace\hfill \end{gathered}$ \\
\hline
  $6$ & $73$ & $73,74,75,\ldots$ &  $366$ & $6$  & \textbf{PDF} & $\begin{gathered} \lbrace 0,1,34,36\rbrace,\lbrace 0,3,18,30\rbrace,\lbrace 0,4,20,28\rbrace,\hfill \\ \lbrace 0,5,22,31\rbrace ,\lbrace 0,6,19,29\rbrace ,\lbrace 0,7,21,32\rbrace\hfill \end{gathered}$ \\
\hline
  $7$ & $85$ & $85,86,87,\ldots$ &  $511$ & $8$  & \textbf{PDF} & $\begin{gathered} \lbrace 0,2,41,42\rbrace,\lbrace 0,5,30,33\rbrace,\lbrace 0,11,31,35\rbrace,\hfill \\ \lbrace 0,12,26,34\rbrace ,\lbrace 0,13,29,36\rbrace ,\lbrace 0,17,32,38\rbrace,\hfill \\ \lbrace 0,18,27,37 \rbrace
  \hfill \end{gathered}$ \\
\hline
  $8$ & $97$ & $97,98,99,\ldots$ &  $680$ & $8$  &  \textbf{PDF} & $\begin{gathered} \lbrace 0,2,47,48\rbrace,\lbrace 0,6,33,38\rbrace,\lbrace 0,8,37,44\rbrace, \hfill \\ \lbrace 0,11,35,39\rbrace ,\lbrace 0,15,31,41\rbrace ,\lbrace 0,17,30,42\rbrace,\hfill \\ \lbrace 0, 19,22,40\rbrace, \lbrace 0, 20,34,43\rbrace
  \hfill \end{gathered}$ \\
\hline
  $8$ & $97$ & $97,99,100,\ldots$ &  $680$ & $8$  &  \textbf{QPDF} & $\begin{gathered} \lbrace 0,2,3,49\rbrace,\lbrace 0,4,27,40\rbrace,\lbrace 0,5,26,43\rbrace, \hfill \\ \lbrace 0,6,25,39\rbrace ,\lbrace 0,7,31,42\rbrace ,\lbrace 0,8,30,45\rbrace,\hfill \\ \lbrace 0,9,29,41 \rbrace, \lbrace 0, 10,28,44\rbrace
  \hfill \end{gathered}$ \\
\hline
  $9$ & $109$ & $109,110,111,\ldots$ &  $873$ & $7$  & \textbf{PDF} &$\begin{gathered} \lbrace 0,2,53,54\rbrace,\lbrace 0,7,39,43\rbrace,\lbrace 0,9,42,50\rbrace, \hfill \\ \lbrace 0,11,38,48\rbrace ,\lbrace 0,15,35,49\rbrace ,\lbrace 0,16,40,46\rbrace,\hfill \\ \lbrace 0, 19,31,44\rbrace, \lbrace 0, 23,28,45\rbrace,  \lbrace 0, 26,29,47\rbrace
  \hfill \end{gathered}$ \\
\hline
  $10$ & $121$ & $121,122,123,\ldots$ &  $1090$ & $\leq 8$  & \textbf{PDF} & $\begin{gathered} \lbrace 0,2,59,60\rbrace,\lbrace 0,7,43,49\rbrace,\lbrace 0,10,47,56\rbrace, \hfill \\ \lbrace 0,12,45,53\rbrace ,\lbrace 0,15,40,54\rbrace ,\lbrace 0,16,35,48\rbrace,\hfill \\ \lbrace 0, 17,44,55\rbrace, \lbrace 0, 24,29,50\rbrace,  \lbrace 0, 28,31,51\rbrace, \hfill \\
  \lbrace 0,30,34,52\rbrace \hfill \end{gathered}$ \\
\hline
  $12$ & $145$ & $145,146,147,\ldots$ &  $1596$ & $\leq 8$  & \textbf{PDF} & $\begin{gathered} \lbrace 0,2,71,72\rbrace,\lbrace 0,9,51,58\rbrace,\lbrace 0,12,57,68\rbrace, \hfill \\ \lbrace 0,14,54,67\rbrace ,\lbrace 0,18,50,66\rbrace ,\lbrace 0,19,55,65\rbrace,\hfill \\ \lbrace 0,23,47,62\rbrace, \lbrace 0,25,52,60\rbrace,  \lbrace 0,30,34,63\rbrace,\hfill \\  \lbrace 0,31,37,59\rbrace, \lbrace 0,38,43,64\rbrace, \lbrace 0,41,44,61\rbrace \hfill \end{gathered}$ \\
\hline
\end{tabular}
\endgroup
\label{Tab2}
\vspace{-.7em}
\end{table}

Considering the results in \cite{RMathon1,JHuang1,JDinitz1,CColbourn1} we reproduced many PDF's and QPDF's. Tables \ref{Tab1} and \ref{Tab2} contain these $(v,k,1)$ PDF's and QPDF's\footnote{For the cases that a PDF doesn't exist, we consider a QPDF one.}, respectively for $k=3$ and $k=4$. Also included in these tables are  accurate lifting degrees, dimensions and minimum distances of $1$-level Type-$k$ CPM-QC-LDPC codes with girth $6$ that are constructed based on the Construction 1 using PDF's and/or QPDF's. Note that all of the reported dimensions and minimum distances are calculated for those constructed codes with smallest lifting degrees. Based on the definition of a PDF (QPDF), CPM based QC codes of the form in Construction 1 are of smallest possible lengths when their lifting degrees are taken to be the smallest accurate ones in the tables. Moreover, from \cite{BVasic1} we know that codes constructed based on DF's have an upper bound minimum distance equal to $2k$. From the results we see that codes constructed based on PDF's (QPDF's) also may have such optimum minimum distances. Based on the theorem \ref{Theo2}, there are two exceptions for the existence of a PDF (QPDF) when $k=4$ and $t=2,3$. Using a computer search we found the smallest possible  integers where they don't form a DF but lead to a $1$-level Type-$4$ CPM-QC-LDPC code with girth $6$ for the case $t=2$. When $t=3$, we also found a DF. The results for both of these cases are provided in Table \ref{Tab2}.
\begin{table}[ht]
\setlength{\tabcolsep}{.7 pt}
\centering
\caption{Parameters And Performances Of Simulated Codes}
\vspace{-.7em}
\begingroup\fontsize{5 pt}{4 pt}
\begin{tabular}{|@{}c@{}|@{}c@{}|@{}c@{}|@{}c@{}|@{}c@{}|@{}c@{}|@{}c@{}|@{}c@{}|@{}c@{}|@{}c@{}|@{}c@{}|@{}c@{}|@{}c@{}|}
\hline
~ & $d_v$ & $d_c$ & $C4$ & $C6$ & $C8$ & $C10$ & $\mathbf{Len}$ & $\mathbf{Dim}$ & $d_\mathbf{min}$ & $\mathbf{g}$ & \textit{SNR}$/$\textit{BER} & \textit{SNR}$/$\textit{FER} \\
\hline
\specialrule{.15em}{.06em}{.06em} 
 $\mathcal{C}^*_1$ &$3$&$6$& $0$ & $3276$ &$11193$ &$26754$ &$546$ &$273$ &$6$&$6$ & $2/1.73e-04$ &$2/1.40e-02$ \\
\hline
$\mathcal{C}^*_3$ &$3$&$6$& $0$ & $728$ & $5096$ & $29484$ &$546$ &$275$ &$10$&$6$ & $2/4.82e-04$ &$2/7.54e-03$ \\
\hline
 $\mathcal{C} [29]$ &$3$&$6$& $0$ & $0$ & $0$ & $10920$ & $546$ & $275$ & $14$ & $10$ & $2/4.75e-03$ &$2/6.09e-02$ \\
 \specialrule{.15em}{.06em}{.06em}
  $\mathcal{C}^*_1$ & $3$ & $9$ & $0$ & $912$ & $7524$ & $40356$ & $57$ & $38$ & $6$ & $6$ & $6.3/9.57e-07$ &$6.3/8.91e-06$ \\
\hline
  $\mathcal{C} [8]$ & $3$ & $9$ & $0$ & $912$ & $7581$ & $40014$ & $57$ & $38$ &$4$& $6$ & $6.3/3.15e-06$ &$6.3/4.11e-05$ \\
\specialrule{.15em}{.06em}{.06em}
$\mathcal{C}^*_1$ & $3$ & $9$ & $0$ & $42240$ & $306240$ & $1739760$ & $3960$ & $2640$ & $6$ & $6$ & $2/9.21e-06$ &$2/5.46e-03$ \\
\hline
$\mathcal{C}^*_4$ & $3$ & $9$ & $0$ & $7920$ & $59730$ & $586740$ & $3960$ & $2640$ &$\leq \hspace{-0.1cm}12$& $6$ & $2/1.27e-04$ &$2/3.15e-03$ \\
\hline
$\mathcal{C}^*_M$ & $3$ & $9$ & $0$ & $3960$ & $44550$ & $458700$ & $3960$ & $2640$ &$\leq \hspace{-0.1cm}26$& $6$ & $2/4.63e-04$ &$2/1.09e-02$ \\
\hline
  $\mathcal{C} [25]$ & $3$ & $9$ & $0$ & $0$ & $7590$ & $111870$ & $3960$ & $2640$ &$\leq \hspace{-0.1cm}470$& $8$ & $2/3.55e-03$ &$2/9.70e-02$ \\
\specialrule{.15em}{.06em}{.06em}
$\mathcal{C}^*_1$ & $3$ &$15$& $0$ & $4340$ & $74865$ & $1101306$ &$155$ & $124$ & $6$ & $6$ & $6/1.96e-07$ &$6/5.41e-06$ \\
\hline
$\mathcal{C} [8]$ & $3$ &$15$&$0$  & $4340$ & $78120$ & $1062432$ & $155$ & $124$ & $4$ & $6$ & $6/7.92e-06$ &$6/2.95e-04$ \\
\specialrule{.15em}{.06em}{.06em}
 $\mathcal{C}^*_1$ &$3$&$18$& $0$ & $7548$ & $164391$ & $3189252$ & $222$ & $185$ & $4$ & $6$ & $6/2.90e-07$ & $6/1.37e-05$\\
\hline
 $\mathcal{C} [8]$ & $3$ &$18$& $0$ & $7548$ &$164169$ & $3192804$ & $222$ & $185$ & $6$ &$6$ & $6/9.85e-08$ &$6/3.75e-06$\\
\specialrule{.15em}{.06em}{.06em}
 $\mathcal{C}^*_1$ &$3$&$18$& $0$ & $8341$ & $171342$ & $3516786$ & $342$ & $285$ & $6$ & $6$ & $6.4/1.10e-9$ &$6.4/6.98e-08$ \\
\hline
 $\mathcal{C} [5]$ &$3$&$18$& $0$ & $5472$ & $193743$ & $3755844$ & $342$ & $287$ & $6$ & $6$ & $6.4/4.30e-09$&$6.4/3.07e-07$ \\
\specialrule{.15em}{.06em}{.06em}
$\mathcal{C}^*_1$ &$3$&$21$&$0$ &$12040$ &$316050$  & $7649442$ & $301$ & $258$ & $6$ & $6$  &$6.3/1.44e-08$ & $6.3/6.63e-07$\\
\hline
$\mathcal{C} [8]$ &$3$&$21$& $0$ &$12040$ & $316050$ & $7650216$ & $301$ & $258$ & $6$ & $6$ & $6.3/1.22e-08$ & $6.3/5.26e-07$\\
\specialrule{.15em}{.06em}{.06em}
$\mathcal{C}^*_2$ & $4$ & $6$ & $0$ & $3591$ & $23009$ & $115710$ & $399$ & $135$ & $22$ & $6$ & $4/7.79e-09$ &$4/4.99e-08$ \\
\hline
$\mathcal{C}^*_3$ & $4$ & $6$ & $0$ & $3828$ & $14784$ & $45474$ & $396$ & $137$ & $14$ & $6$ & $4/7.79e-08$ &$4/1.30e-06$ \\
\hline
$\mathcal{C} [26]$ & $4$ & $6$ & $0$ & $798$ & $5852$ & $75278$ & $399$ & $135$ & $24$ & $6$ & $4/5.22e-06$ & $4/4.12e-05$\\
\specialrule{.15em}{.06em}{.06em}
$\mathcal{C}^*_2$ & $4$ & $10$ & $0$ & $12882$ & $163134$ & $1755258$ & $570$ & $344$ & $12$ & $6$ & $3.5/2.99e-08$ & $3.5/1.11e-07$\\
\hline
$\mathcal{C} [32]$ & $4$ & $10$ & $0$ & $0$ & $86526$ & $1418160$ & $570$ & $345$ & $50$ & $8$ & $3.5/4.16e-06$ &$3.5/6.37e-05$ \\
\specialrule{.15em}{.06em}{.06em}
$\mathcal{C}^*_2$ & $4$ & $10$ & $0$ & $43053$ & $545211$ & $5866257$ & $1905$ & $1145$ & $\leq\hspace{-0.1cm} 12$ &$6$ & $2.5/1.01e-07$& $2.5/5.79e-06$ \\
\hline
$\mathcal{C}^*_3$ & $4$ & $10$ & $0$ & $14440$ & $251370$ & $3768270$ & $1900$ & $1143$ & $\leq\hspace{-0.1cm} 88$ & $6$ & $2.5/1.92e-06$ & $2.5/3.91e-05$ \\
\hline
$\mathcal{C} [30]$ & $4$ & $10$ & $0$ & $2286$ & $77724$ & $1489710$ & $1905$ & $1145$ & $\leq\hspace{-0.1cm} 127$ & $6$ & $2.5/7.42e-05$ &  $2.5/1.28e-03$\\
\hline
$\mathcal{C} [26]$ & $4$ & $10$ & $0$ & $4191$ & $86106$ & $1536192$ & $1905$ & $1145$ & $\leq \hspace{-0.1cm}286$ & $6$ & $2.5/7.01e-05$ &$2.5/1.17e-03$ \\
\specialrule{.15em}{.06em}{.06em}
$\mathcal{C}^*_1$ & $4$ & $24$ & $0$ & $96960$ & $3674560$ & -- & $960$ & $801$ & $8$ & $6$ & $4.4/2.48e-07$ &$4.4/1.23e-05$ \\
\hline
$\mathcal{C} [35]$ & $3.4$ & $20$ & $40$ & $16520$ & $617840$ & -- & $960$ & $800$ & $7$ & $4$ & $4.4/9.80e-07$ &$4.4/5.93e-05$ \\
\specialrule{.15em}{.06em}{.06em}
$\mathcal{C}^*_1$ & $4$ & $32$ & $0$ & $144336$ & $8696535$ & -- & $776$ & $680$ & $8$ & $6$ & $5.8/4.01e-09$ &$5.8/1.94e-07$ \\
\hline
$\mathcal{C} [6]$ & $4$ &$32$ &  $702$ & $134345$ & $7705459$ & -- & $776$ &$680$ &  $4$ & $4$ & $5.8/2.93e-09$ & $5.8/3.45e-07$\\
\specialrule{.15em}{.06em}{.06em}
$\mathcal{C}^*_1$ & $4$ & $32$ & $0$ & $490560$ & $26399296$ & -- & $3584$ & $3137$ & $8$ & $6$ & $4/5.68e-07$ &$4/5.50e-05$ \\
\hline
$\mathcal{C} [34]$ & $4$ & $32$ & $0$ & $140180$ & $9577506$ & -- & $3584$ & $3141$ & $\leq \hspace{-0.1cm}140$ & $6$ & $4/1.13e-05$ &$4/7.45e-04$ \\
\specialrule{.15em}{.06em}{.06em}
\end{tabular}
\endgroup
\label{Tab3}
\vspace{-.7em}
\end{table}

Table \ref{Tab3} contains degree distributions (i.e., column and row weights), multiplicities of simple cycles (of lengths $4,6,8$ and $10$), lengths, dimensions, minimum distances and meaningful BER/FER performances of all of the simulated codes including our constructed codes and those one in the literature. By meaningful BER/FER performances, we mean that we've reported BER's and also FER's of simulated codes at Signal to Noise Ratio (SNR) points in which we observed a meaningful differences between the performances of some comparable codes, i.e., codes with the same lengths and degree distributions. These simulations preformed on a desktop computer with $3.5$ GHz CPU and $16$ GB RAM based on a SP decoding algorithm with maximum number of iterations equal to $150$, for binary phase shift keying (BPSK) modulation over AWGN channel. For each simulation point, fifty block errors are generated.

Codes denoted by $\mathcal{C}^*_1$ in Table \ref{Tab3} are $(d_v=k,d_c)$-regular $1$-level Type-$k$ CPM-QC-LDPC codes ($k=3,4$) with girth $6$ that are constructed based on Construction 1 using a $(v,k,1)$ PDF (or QPDF) reported in Tables \ref{Tab1} and \ref{Tab2}. Lengths of these codes depend on which one of lifting degrees are selected that is \begingroup\fontsize{9.5pt}{11pt}$\mathbf{Len}(\mathcal{C}^*_1)=N\frac{d_c}{d_v}=Nt$\endgroup. Codes denoted by $\mathcal{C}^*_2$ in Table \ref{Tab3} are $(d_v=4,d_c)$-regular $2$-level Type-$2$ CPM-QC-LDPC codes with girth $6$ that are constructed based on Construction 2 using a $(v,4,1)$ PDF (or QPDF) reported in Table \ref{Tab2}. Indeed, to reach a $(4,d_c)$-regular code $\mathcal{C}^*_2$, we applied $2$-CDT on the exponent matrix of a $(4,2d_c)$-regular code $\mathcal{C}^*_1$ in a way that we dispersed four elements in each column of exponent matrix $P^*_1$ by preserving two smallest ones to be in the first level of corresponding column in exponent matrix $P^*_2$. Clearly the next two elements must be considered as the elements in the second level of the specified column of $P^*_2$. See below example. 
\begin{Example}\label{Ex5}
Let $N=114$ and consider exponent matrix $P^*_1$ \begingroup\fontsize{9pt}{11pt}
$$\left[ \begin{array}{@{}c@{}c@{}c@{}c@{}c@{}c@{}}
(0,1,8,28)\;\;\; & (0,2,14,24)\;\;\; & (0,3,18,29)\;\;\; &(0,4,17,23) \;\;\; &(0,5,21,30)\end{array}\right]$$
\endgroup where, it's elements are given to be blocks of PDF in Table \ref{Tab2} with $t=5$. By applying Construction 1 on $P^*_1$, it results in a $(4,20)$-regular $1$-level Type-$4$ CPM-QC-LDPC code $\mathcal{C}^*_1$ with girth 6 and length $570$. Also, by applying $2$-CDT on $P^*_1$ we find exponent matrix $P^*_2$ as below: \begingroup\fontsize{9pt}{11pt}
$$\left[ \begin{array}{@{}c@{}c@{}c@{}c@{}c@{}c@{}}
(0,1)\;\;\; & (0,2)\;\;\; & (0,3)\;\;\; &(0,4) \;\;\; &(0,5)\\
(8,28)\;\;\; & (14,24)\;\;\; & (18,29)\;\;\; &(17,23) \;\;\; &(21,30)\end{array}\right]$$
\endgroup
where, it results in a $(4,10)$-regular $2$-level Type-$2$ CPM-QC-LDPC code $\mathcal{C}^*_2$ with girth $6$ and length $570$ reported in Table \ref{Tab3}.
\end{Example}
Codes denoted by $\mathcal{C}^*_3$ in Table \ref{Tab3} are $(d_v=k,d_c)$-regular $k$-level Type-$1$ CPM-QC-LDPC codes ($k=3,4$) with girth $6$ that are constructed based on Construction 2 using a $(v,k,1)$ PDF (or QPDF) reported in Tables \ref{Tab1} and \ref{Tab2}. Indeed, to reach a $(d_v=k,d_c)$-regular code $\mathcal{C}^*_3$, we applied $k$-CDT on exponent matrix of a $(k,kd_c)$-regular code $\mathcal{C}^*_1$. We dispersed $k$ elements in each column of exponent matrix $P^*_1$ starting from the smallest element, downward and in an increasing order, to put the $j$-th smallest element ($1\leq j\leq k$) in $j$-th level of corresponding column of exponent matrix $P^*_3$. See below example. 
\begin{Example}\label{Ex6}
Let $N=190$ and consider $1\times 10$ exponent matrix $P^*_1$, where, it's elements are given to be blocks of PDF in Table \ref{Tab2} with $t=10$. By applying Construction 1 on $P^*_1$, it results in a $(4,40)$-regular $1$-level Type-$4$ CPM-QC-LDPC code $\mathcal{C}^*_1$ with girth 6 and length $1900$. Also, by applying $4$-CDT on $P^*_1$ we find exponent matrix $P^*_3$ as below: \begingroup\fontsize{9pt}{11pt}
$$\left[ \begin{array}{@{}c@{}c@{}c@{}c@{}c@{}c@{}c@{}c@{}c@{}c@{}}
0\;\;\;\; & 0\;\;\;\; & 0\;\;\;\; &0\;\;\;\; &0\;\;\;\; &0\;\;\;\; &0\;\;\;\; &0\;\;\;\; &0\;\;\;\; &0\\
2\;\;\;\; & 7\;\;\;\; & 10\;\;\;\; &12\;\;\;\; &15\;\;\;\; &16\;\;\;\; &17\;\;\;\; &24\;\;\;\; &28\;\;\;\; &30\\
59\;\;\;\; & 43\;\;\;\; & 47\;\;\;\; &45\;\;\;\; &40\;\;\;\; &35\;\;\;\; &44\;\;\;\; &29\;\;\;\; &31\;\;\;\; &34\\
60\;\;\;\; & 49\;\;\;\; & 56\;\;\;\; &53\;\;\;\; &54\;\;\;\; &48\;\;\;\; &55\;\;\;\; &50\;\;\;\; &51\;\;\;\; &52
\end{array}\right],$$
\endgroup
where, it results in a $(4,10)$-regular $4$-level Type-$1$ CPM-QC-LDPC code $\mathcal{C}^*_3$ with girth $6$ and length $1900$ reported in Table \ref{Tab3}.
\end{Example}
Of structures of our constructed codes in Table \ref{Tab3}, $\mathcal{C}^*_4$ and $\mathcal{C}^*_M$ are still unknown. These codes are $(3,9)$-regular $4$-level Type-$1$ CPM-QC-LDPC codes with lifting degree $N=330$, girth $6$ and length $3960$. $\mathcal{C}^*_4$ is a code, where, it's exponent matrix $P^*_4$ is as follow:
\begingroup\fontsize{9pt}{11pt}
$$\left[ \begin{array}{@{}c@{}c@{}c@{}c@{}c@{}c@{}c@{}c@{}c@{}c@{}c@{}c@{}}
0\;\;\; & -\infty\;\;\; & 0\;\;\; &-\infty\;\;\; &0\;\;\; &-\infty\;\;\; &0\;\;\; &0\;\;\; &0\;\;\; &0\;\;\; &0\;\;\; &0\\
-\infty\;\;\; & 0\;\;\; &-\infty\;\;\; &0\;\;\; &-\infty\;\;\; &0\;\;\; &8\;\;\; &10\;\;\; &3\;\;\; &5\;\;\; &2\;\;\; &4\\
1\;\;\; & 11\;\;\; & 6\;\;\; &12\;\;\; &7\;\;\; &9\;\;\; &33\;\;\; &-\infty\;\;\; &31\;\;\; &-\infty\;\;\; &19\;\;\; &-\infty\\
36\;\;\; & 26\;\;\; & 29\;\;\; &30\;\;\; &21\;\;\; &22\;\;\; &-\infty\;\;\; &34\;\;\; &-\infty\;\;\; &32\;\;\; &-\infty\;\;\; &20
\end{array}\right].$$\endgroup
$P^*_4$ is designed by applying $4$-CDT on an exponent matrix $P^*_1$ that is constructed based on Construction 1 and blocks of PDF in Table \ref{Tab1} with $t=12$. $\mathcal{C}^*_M$ is a code, where, it's exponent matrix $P^*_M$ is as follow:
\begingroup\fontsize{9pt}{11pt}
$$\left[ \begin{array}{@{}c@{}c@{}c@{}c@{}c@{}c@{}c@{}c@{}c@{}c@{}c@{}c@{}}
0\;\;\; & -\infty\;\;\; & 0\;\;\; &-\infty\;\;\; &0\;\;\; &-\infty\;\;\; &0\;\;\; &0\;\;\; &0\;\;\; &0\;\;\; &0\;\;\; &0\\
-\infty\;\;\; & 9\;\;\; &-\infty\;\;\; &14\;\;\; &-\infty\;\;\; &19\;\;\; &23\;\;\; &25\;\;\; &30\;\;\; &31\;\;\; &38\;\;\; &41\\
71\;\;\; & 51\;\;\; & 57\;\;\; &54\;\;\; &50\;\;\; &55\;\;\; &47\;\;\; &-\infty\;\;\; &34\;\;\; &-\infty\;\;\; &43\;\;\; &-\infty\\
72\;\;\; & 58\;\;\; & 68\;\;\; &67\;\;\; &66\;\;\; &65\;\;\; &-\infty\;\;\; &60\;\;\; &-\infty\;\;\; &59\;\;\; &-\infty\;\;\; &61
\end{array}\right].$$\endgroup
$P^*_M$ is obtained in two steps: first, by applying $4$-CDT on an exponent matrix $P^*_1$ that is constructed based on Construction 1 and blocks of PDF in Table \ref{Tab2} with $t=12$; secondly, by {\em masking} the resultant exponent matrix in the first step. Note that masking technique is well-known for algebraic codes and we used the masking matrix in relation (4) of \cite{JLi1}.
\begin{table}[ht]
\setlength{\tabcolsep}{.7 pt}
\centering
\caption{Expected Number of Simple Cycles in a $\left(d_v,d_c \right)$-Regular Tanner Graph}
\vspace{-.7em}
\begingroup\fontsize{5 pt}{4 pt}
\begin{tabular}{|@{}c@{}|@{}c@{}|@{}c@{}|@{}c@{}|@{}c@{}|@{}c@{}|@{}c@{}|@{}c@{}|}
\hline
$d_v$ & $d_c$ & $\mathbf{Len}$($n^{\prime}$) & \textbf{Ro}($m^{\prime}$)& $\overline{C4}$ & $\overline{C6}$& $\overline{C8}$& $\overline{C10}$\\
\hline
$3$&$6$& $546$ & $273$ &$25$&$165$&$1230$&$9727$\\
\hline
$3$&$9$& $57$ & $19$ &$62$&$599$&$6201$&$64521$\\
\hline
$3$&$9$& $3960$ & $1320$ &$64$&$681$&$8162$&$104197$\\
\hline
$3$&$15$& $155$ & $31$ &$191$&$3355$&$64193$&$1265925$\\
\hline
$3$&$18$& $222$ & $37$ &$282$&$6084 $&$143416$&$3505416$\\
\hline
$3$&$18$& $342$ & $57$ &$285$&$6246 $&$151489$&$ 3850063$\\
\hline
$3$&$21$& $301$ & $43$ &$392$&$10000 $&$280318$&$8180916$\\
\hline
$4$&$6$& $399$ & $266$ &$56$&$557 $&$6202$&$73347$\\
\hline
$4$&$10$& $570$ & $228$ &$182$&$3242 $&$64809$&$1375905$\\
\hline
$4$&$10$& $1905$ & $762$ &$182$&$3269 $&$65943$&$1417074$\\
\hline
$4$&$24$& $960$ & $160$ &$1183$&$53771$&$2731187$&$147030214$\\
\hline
$4$&$32$& $776$ & $97$ &$2141$&$130067 $&$8794417$&$627521489$\\
\hline
$4$&$32$& $3584$ & $448$ &$2158$&$133191 $&$9228503$&$680517524$\\
\hline
$4$&$40$& $2400$ & $240$ &$3409$&$263690$&$22851973$&$2103513404$\\
\hline
\end{tabular}
\endgroup
\label{Tab4}
\vspace{-.7em}
\end{table}
\begin{figure}[ht]
\centering
\includegraphics[width=10cm, height=6cm]{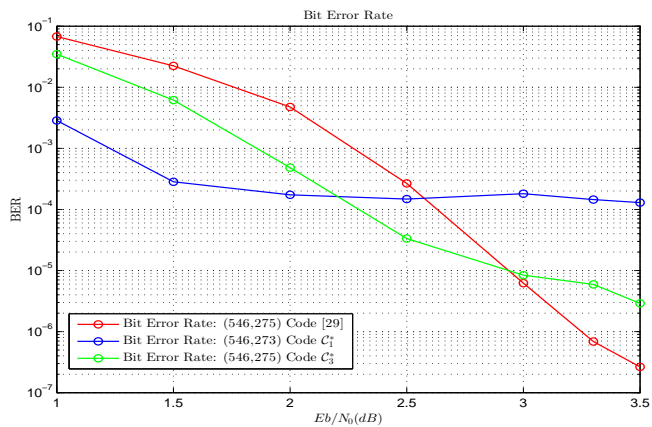}
\caption{Performance comparison of $(546,273)$ \textbf{QPDF} code $\mathcal{C}^*_1$ with $N=273$, and, $(546,275)$ code $\mathcal{C}^*_3$ with $N=91$ that are respectively, constructed based on Constructions 1 and 3, with $(546,275)$ codes reported in \cite{ATasdighi1} with $N=91$.
}\label{Fig:Reg36g6L546}
\end{figure}
\begin{figure}[ht]
\centering
\includegraphics[width=10cm, height=6cm]{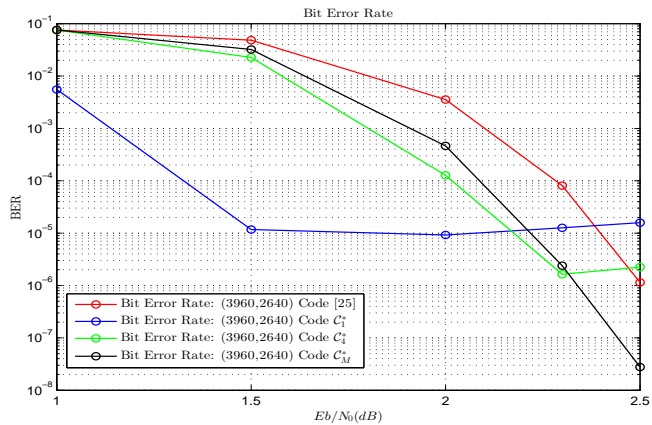}
\caption{Performance comparison of $(3960,2640)$ \textbf{QPDF} code $\mathcal{C}^*_1$ constructed based on Constructions 1 with $N=1320$, $(3960,2640)$ code $\mathcal{C}^*_4$ with $N=330$ and $(3960,2640)$ code $\mathcal{C}^*_M$ with $N=330$, with $(3960,2640)$ codes reported in \cite{JLi1} with $N=330$.
}\label{Fig:Reg39g6L3960}
\end{figure}

Table IV contains expected number of simple cycles of lengths $2k^{\prime}$ ($k^{\prime}=2,3,4,5$) in Tanner graph of a $(d_v,d_c)$-regular code with parity-check matrix of size $m^{\prime}\times n^{\prime}$. Values in this table are calculated from equation (\ref{Rela5}) and they are considered as reference values to compare the multiplicities of simple cycles of our simulated codes (in Table \ref{Tab3}) with them. 

In the following pros and cons of designing CPM-QC-LDPC codes using PDF's and/or QPDF's are notified. In addition, tailored to our justifications, performance curves for some of the simulated codes are sketched.

\textbf{1)} The proposed QC codes that are constructed based on Construction 1 have rates equal to $\frac{t-1}{t}$  ($2\leq t$) \cite{HPark2}. Although this rate cover many cases, it is not flexible when $t$ is small. Based on how one applies the column dispersion technique on a QC code of Construction 1, he/she is able to construct codes with more various rates, preserving their lengths, as well as, guaranteeing their girths to be at least $6$. For instance, $(4,10)$-regular codes $\mathcal{C}^*_2$'s and $\mathcal{C}^*_3$ in Table \ref{Tab3} have rates approximately equal to $R=0.6$, where,  $\frac{2-1}{2}< R<\frac{3-1}{3}$. 

\textbf{2)} For fixed parameters $t$, $d_v$ and $d_c$ the proposed QC codes that are constructed based on Construction 1 have various lengths considering accurate lifting degrees $N$ which are reported in Tables \ref{Tab1} and \ref{Tab2}. This fact is also true for the codes of Construction 2. In addition, based on remark \ref{Rem1} there may be some other (smaller) accurate lifting degrees for the codes of Construction 2. Moreover, given the smallest accurate lifting degrees for the codes of Construction 1, it is known \cite{HPark2} that they have smallest possible lengths among all CPM-QC-LDPC codes with girth 6. Array codes \cite{JFan1} and the ones presented in \cite{hagiwara1} are two different families of Type-$1$ CPM-QC-LDPC codes with girth $6$ which are famous for their small lengths. However, our counterpart codes of Construction 1 have much smaller lengths. Capability of our presented codes in achieving very high rates with small to moderate lengths have their own applications in designing error-correcting systems which require short packets.

\textbf{3)} Reported minimum distances for our codes of Construction 1 with minimum accurate lifting degrees (in Tables \ref{Tab1} and \ref{Tab2}) reveal that they have capabilities in reaching the upper bound minimum distances $2k$ of the constructed codes based on CDF's in \cite{BVasic1}. 

\textbf{4)} As it is expressed in \cite{HPark2}, we must pay heed to the use of codes of Construction 1 when we are taking their lifting degrees to be very larger than the minimum accurate one. Simulation results show that they may have no good error floor or even waterfall region in their performance curves. One of the main reasons for such unfavorable performances which it is not pointed in \cite{HPark2} is the existence of very high number of small inevitable cycles in Tanner graph of these codes. Due to the existence of small shift values in exponent matrix of such codes the probability of formation of many small inevitable cycles is high. As soon as we choose large lifting degrees, the multiplicities of such small cycles will linearly increase proportion to the selected lifting degrees. Multiplicities of all of the simple cycles of lengths up to $10$ of the simulated codes are reported in Table \ref{Tab3}. By comparing these multiplicities with their reference values in Table \ref{Tab4} for regular random LDPC codes, it is clear that codes of Construction 1 contain much more number of simple cycles in their Tanner graphs. Consider $(3,6)$ and $(3,9)$-regular codes with lifting degrees $273$ and $1320$ denoted by $\mathcal{C}^*_1$'s in Table \ref{Tab3}. Based on the results in Table \ref{Tab1}, minimum accurate lifting degree for a $(3,6)$-regular ($(3,9)$-regular)  \textbf{QPDF} code is $13$ ($19$). It means that, minimum accurate lifting degree for a $(3,6)$-regular ($(3,9)$-regular) \textbf{QPDF} code in Table \ref{Tab1} is about $21$ ($69$) times smaller than the one in Table \ref{Tab3}. By comparing the corresponding results in Tables \ref{Tab3} and \ref{Tab4} for $(3,6)$-regular ($(3,9)$-regular) code, one can see that their multiplicities of short simple cycles are much greater than the expected number of ones in counterpart random LDPC codes. For example, code $\mathcal{C}^*_1$ with lifting degree $273$ ($1320$) has $3276$ ($42240$) number of simple cycles of length $6$, while Tanner graph of a random LDPC code with similar parameters has nearly $165$ ($681$) number of simple cycles of length $6$, where, is about $20$ ($62$) times smaller than the ones in $\mathcal{C}^*_1$. Performances to $(3,6)$ and $(3,9)$-regular codes with lifting degrees $273$ and $1320$ are respectively, sketched in Figs. \ref{Fig:Reg36g6L546} and \ref{Fig:Reg39g6L3960} as blue curves. A very small sharp waterfall along with a high error floor in any of the curves are obvious.

\textbf{5)} Our method of constructing CPM-QC-LDPC codes using CDT, not only brings us a flexible scheme in designing new codes from primary codes with the same lengths, probably higher girths and higher minimum distances , but also, help us in preventing the collection of huge number of short simple cycles in Tanner graph of designed codes. Indeed, by dispersing each column, we are potentially prevent the union of small cycles. For instance, $(3,6)$-regular ($(3,9)$-regular) code $\mathcal{C}^*_3$ ($\mathcal{C}^*_4$) with lifting degree $91$ ($330$) is constructed using CDT. As it is shown in Table \ref{Tab3}, multiplicities of short simple cycles in Tanner graph of these codes are much smaller than ones in codes $\mathcal{C}^*_1$'s declared in previous part. Performances to $\mathcal{C}^*_3$ and $\mathcal{C}^*_4$ are respectively, sketched in Figs. \ref{Fig:Reg36g6L546} and \ref{Fig:Reg39g6L3960} as green curves. As we can see, waterfall regions of the curves of $\mathcal{C}^*_3$ and $\mathcal{C}^*_4$ are not as sharp as those one for $\mathcal{C}^*_1$'s, however, they have lower error floors thanks to their smaller number of short cycles. Also included in these figures, are the performances of the codes in \cite{ATasdighi1} and  \cite{JLi1} with similar rates and lengths to our designed codes. Since their multiplicities of short simple cycles are much smaller than our designed codes $\mathcal{C}^*_1$'s, $\mathcal{C}^*_3$ and $\mathcal{C}^*_4$ (see Table \ref{Tab3}), so, in contrast with our codes, they have no error floor at least down to BER$=10^{-6}$.  
With heuristic designs one can construct CPM-QC-LDPC codes of Construction 2 in which the multiplicities of small cycles are decreased gradually and with subtlety. For example, $(3,9)$-regular code $\mathcal{C}^*_M$ is of Construction 2 multiplied by a masking matrix in \cite{JLi1}, where, with this trick, we lowered the number of short simple cycles of the code even smaller than this number in $\mathcal{C}^*_4$ (see Table \ref{Tab3}). Performance of $\mathcal{C}^*_M$ is provided in Fig. \ref{Fig:Reg39g6L3960} as a black curve. This curve has sharper waterfall than the code in \cite{JLi1}, as well as, it has no error floor at least down to BER$=10^{-8}$.
\begin{figure}[ht]
\centering
\includegraphics[width=10cm, height=6cm]{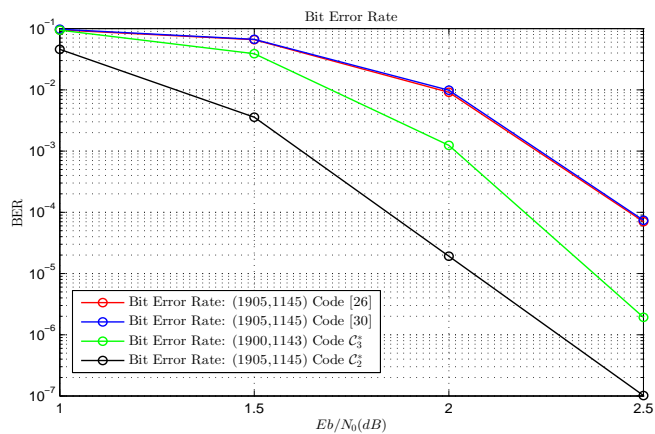}
\caption{Performance comparison of $(1905,1145)$ code $\mathcal{C}^*_2$ with $N=381$, and, $(1900,1143)$ code $\mathcal{C}^*_3$ with $N=190$ that are respectively, constructed based on Constructions 2 and 3, with $(1905,1145)$ codes reported in \cite{LZhang1,GZhang1} with $N=381$.
}\label{Fig:Reg410g6L1905}
\end{figure}
\begin{figure}[ht]
\centering
\includegraphics[width=10cm, height=6cm]{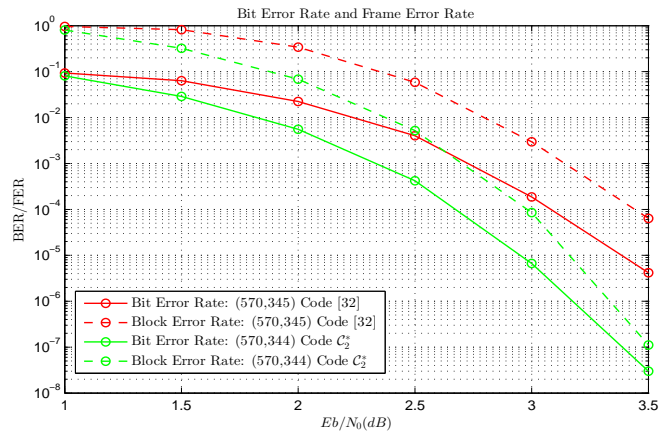}
\caption{Performance comparison of $(570,344)$ code $\mathcal{C}^*_2$ with $N=114$ based on Construction 2, with a $(570,345)$ code \cite{ATasdighi2} with $N=57$.
}\label{Fig:Reg410g6L570}
\end{figure}

\textbf{6)} We did simulate many $m$-level Type-$L$ CPM-QC-LDPC codes including our codes of Constructions 1 and 2, as well as, well-known codes in the literature, for lots of $m$'s, $L$'s and with various lengths and rates. Based on our reported experimental results in Table \ref{Tab3}, as well as, many other simulation results that we did not report them for the sake of brevity, slopes of performance curves of codes with the same rates, lengths and degree distributions are mainly influenced by two critical factors: minimum distances and multiplicities of short simple cycles of the codes. The positive role of minimum distance in performances of linear codes is overt. On the other hand, the larger the number of short simple cycles in Tanner graph of a code is, the sharper the waterfall of that code is. Also, the role of smaller cycles in promoting waterfall is much more important than the role of larger cycles. As an example, we compared multiplicities of short simple cycles in Tanner graph of four $(4,10)$-regular codes $\mathcal{C}^*_2$,  $\mathcal{C}^*_3$, a code in \cite{GZhang1} and a code in \cite{LZhang1}, respectively, of lengths $1905$, $1900$, $1905$ and $1905$ which are reported in Table \ref{Tab3}. From this table, we can see that number of short simple cycles in $\mathcal{C}^*_2$ is bigger than those one in $\mathcal{C}^*_3$. Furthermore, for each of the codes $\mathcal{C}^*_2$ and $\mathcal{C}^*_3$ this number is much bigger than the expected number of simple cycles in a $(4,10)$-regular random code of length $1905$ reported in Table \ref{Tab4}. On the other hand, multiplicities of short cycles in each one of the codes in \cite{GZhang1} and \cite{LZhang1} are more or less the same and are nearly equal to the expected number in a random code. The minimum founded upper bounds to the minimum distances of these codes are also provided in Table \ref{Tab3}. Curves to the BER performances of these codes are sketched in Fig. \ref{Fig:Reg410g6L1905}, where, the sharper waterfall is for our constructed code $\mathcal{C}^*_2$ with the most large number of short cycles in its Tanner graph, howbeit, it seems that $\mathcal{C}^*_2$ has smallest minimum distance in comparison with the others. The second sharp waterfall is for our code $\mathcal{C}^*_3$, where, Tanner graph of this code stands in second place to have large number of short cycles.
\begin{figure}[ht]
\centering
\includegraphics[width=10cm, height=6cm]{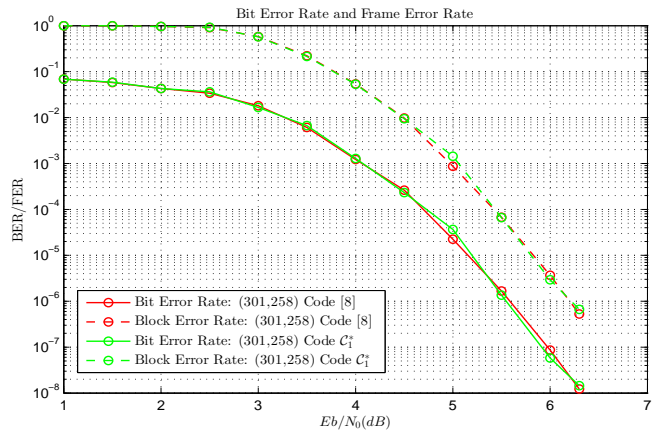}
\caption{Performance comparison of $(301,258)$ \textbf{QPDF} code $\mathcal{C}^*_1$ with $N=43$ based on Construction 1, with a $(301,258)$ \textbf{CDF} code \cite{BVasic1} with $N=43$.
}\label{Fig:Reg321g6L301}
\end{figure}
To explicit the importance of high number of short simple cycles in improving waterfall region of performance of a code, we consider two other $(4,10)$-regular codes $\mathcal{C}^*_2$ and symmetric code in \cite{ATasdighi2} both of length $570$ in Table \ref{Tab3}. From this table, it is clear that the number of short simple cycles of our constructed code $\mathcal{C}^*_2$ is much higher than this number for the symmetric code. Fig. \ref{Fig:Reg410g6L570} contain BER/FER curves to the performances of these codes, where, $\mathcal{C}^*_2$ outperforms the symmetric code in waterfall region at least down to BER$=10^{-8}$.  Note that symmetric code has a very large minimum distance compared to $\mathcal{C}^*_2$, however, it seems that its lack of short cycles of length $6$, as well as, its small number of other short cycles has more effects on declining the waterfall than that its high minimum distance affects on improving the waterfall.  
\begin{figure}[ht]
\centering
\includegraphics[width=10cm, height=6cm]{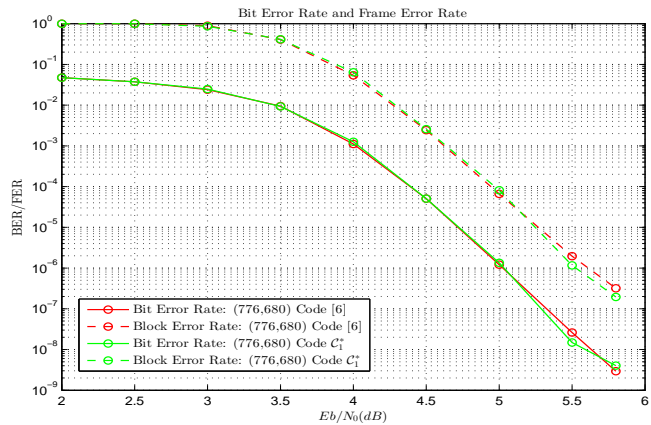}
\caption{Performance comparison of $(776,680)$ \textbf{QPDF} code $\mathcal{C}^*_1$ with $N=97$, with a $(776,680)$ PEG code \cite{XHu1}.
}\label{Fig:Reg432g6L776}
\end{figure}
Finally, to concurrently check the influences of minimum distances and number of short simple cycles on waterfall region of performances of short to moderate length CPM-QC-LDPC codes, we refer the readers to our  results of simulated codes in Tables \ref{Tab3} and \ref{Tab4}. Specially, those compared codes in Figs. \ref{Fig:Reg321g6L301}, \ref{Fig:Reg432g6L776} and \ref{Fig:Reg424g6L960}. For instance, an interesting result will come out if we compare two $(3,21)$-regular code $\mathcal{C}^*_1$ and the code in \cite{BVasic1} both with girths $6$ and lengths $301$. As we can see from the Table \ref{Tab3}, these codes have the same minimum distances, as well as, the same number of simple cycles of lengths $6$ and $8$. They have a very slice (venial) difference in their number of simple cycles of length $10$. Curves to BER/FER performances of these codes are drawn in Fig. \ref{Fig:Reg321g6L301}, where, we can see a perfect match between these curves down to BER$=10^{-8}$. This conformity among their performances is mainly due to their equal parameters, specially, their equal minimum distances, as well as, their close number of short simple cycles. 
\begin{figure}[ht]
\centering
\includegraphics[width=10cm, height=6cm]{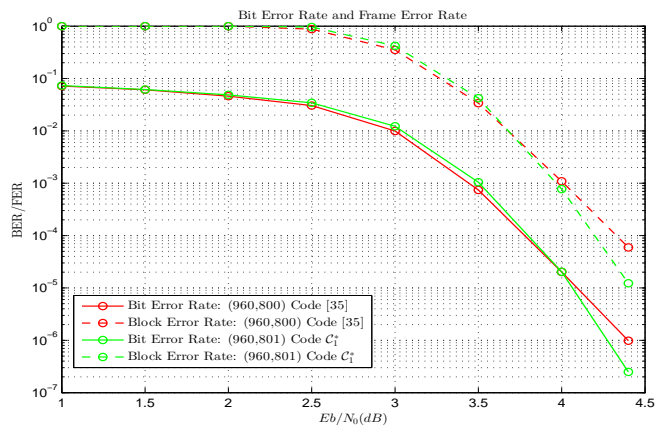}
\caption{Performance comparison of $(960,801)$ \textbf{PDF} code $\mathcal{C}^*_1$ with $N=160$, with a $(960,800)$ WiMAX code \cite{WiMax1}.
}\label{Fig:Reg424g6L960}
\end{figure}

\section{Conclusion} \label{Sec5}
In this paper, using CDF's a class of combinatoric-based QC-LDPC codes with $dv = 3$ or $4$ is reconsidered. In fact, taking the advantages of PDF's and QPDF's as two specific sub-classes of CDF's, Type-$L$ ($L=3,4$) parity-check matrices having girth 6, shortest possible length and consisting of a single row of circulants are constructed. In the sequel and picking a PDF (QPDF) out, a new flexible design technique named CDT is proposed. By new constructed codes based on CDT, not only we are able to preserve the girths also the minimum distances of new codes to be greater than or equal to the primary ones of underlying codes, but also we subtly can derive new codes with various Types and rates. 

One other privilege of constructing QC-LDPC codes using PDF's (QPDF's) was their high number of short simple cycles which in turn affects the performances of the codes. Considering the experimental results that reported in Tables \ref{Tab3} and \ref{Tab4}, we clarified that constructed codes based on PDF's (QPDF's), as well as, those new introduced ones using CDT mainly possess higher multiplicities of small cycles with respects to general LDPC codes. Comparing our proposed codes with numerous well-known existing regular LDPC codes of similar rates, degree distributions and lengths, we statistically demonstrated the impact of the existence of high number of simple cycles in Tanner graph of LDPC codes. This impact was positive for waterfall region but negative for error floor region of the code. Moreover, we illustrated the capability of the introduced CDT in designing codes with arbitrary number of short cycles, where they were prevented to have precocious error floors. Finally, we presented several curves to our simulation results which they verified our assertions.

\section*{Acknowledgment}

The authors would like to thank...


\end{document}